\def\Zee{{\mathbb Z}}
\newcommand{\vdashstar}{\mathrel{\mbox{$\vdash\hspace{-0.75em}{\raisebox{1.1ex}{$\scriptstyle*$}}$}}}
\newcommand{\be}{\begin{enumerate}}
\newcommand{\ee}{\end{enumerate}}
\begin{document}
\title{Automata and Reduced Words in the Free Group}
\author{Thomas Ang%
	\inst{1}
\and Giovanni Pighizzini%
	\inst{2}
\and Narad Rampersad%
	\inst{3}
\and Jeffrey Shallit%
	\inst{1}
}
\institute{David R.\ Cheriton School of Computer Science\\
	University of Waterloo
	Waterloo, ON, Canada N2L 3G1\\
	\email{\{tang,shallit\}@uwaterloo.ca}
\and Dipartimento di Informatica e Comunicazione\\
	Universit\`{a} degli Studi di Milano,
	via Comelico 39, 20135 Milano, Italy\\
	\email{pighizzini@dico.unimi.it}
\and Department of Mathematics and Statistics\\
	University of Winnipeg,
	Winnipeg, MB, Canada R3B 2E9\\
	\email{n.rampersad@uwinnipeg.ca}
}
\toctitle{Automata and Reduced Words in the Free Group}%
\tocauthor{T.~Ang, G.~Pighizzini, N.~Rampersad, J.~Shallit}%
\maketitle

\setcounter{footnote}{0} 

\begin{abstract}
We consider some questions about formal languages that arise when
inverses of letters, words and languages are defined. The reduced
representation of a language over the free monoid is its unique
equivalent representation in the free group. We show that the class of
regular languages is closed under taking the reduced representation,
while the class of context-free languages is not. We also give an upper
bound on the state complexity of the reduced representation of a
regular language, and prove upper and lower bounds on the length of the
shortest reducible string in a regular language. Finally we show that
the set of all words which are equivalent to the words in a regular 
language can be nonregular, and that regular languages are not closed 
under taking a generalized form of the reduced representation.
\end{abstract}

\section{Introduction}
A word in a free group can be represented in many different ways. For example, $a a a^{-1}$ and $a a^{-1} b a a^{-1} b^{-1} a$ are two different ways to write the word $a$. Among all the different representations, however, there is one containing no occurrences of a letter next to its own inverse. We call such a word {\it reduced}. In this paper we consider some basic questions about formal languages and their reduced representations. In previous work on automatic groups, these notions of inverse symbols and reduced words have been studied, but only with regards to automata that are assumed to generate groups \cite{Eps}.

First we define some standard notation. A deterministic finite automaton (DFA) is denoted by a quintuple $(Q, \Sigma, \delta, q_0, F)$ where $Q$ is the finite set of states, $\Sigma$ is the finite input alphabet, $\delta:Q\times\Sigma\rightarrow Q$ is the transition function, $q_0 \in Q$ is the initial state, and $F \subseteq Q$ is the set of accepting states. 
We generalize $\delta$ to the usual extended transition function 
with domain $Q \times \Sigma^*$. 
We use similar notations to represent a nondeterministic finite automaton with $\epsilon$-transitions ($\epsilon$-NFA), except the transition function is $\delta:Q\times(\Sigma\cup\{\epsilon\})\rightarrow 2^Q$. In an $\epsilon$-NFA it is possible to have $\epsilon$-transitions, which are transitions that can be taken without reading input symbols.
For a DFA or $\epsilon$-NFA $M$, $L(M)$ is the language accepted by
$M$. For any $x \in \Sigma^*$, $|x|$ denotes the length of $x$, and
$|x|_a$ for some $a \in \Sigma$ denotes the number of occurrences of
$a$ in $x$. We let $|\Sigma|$ denote the alphabet size. We use the
terms {\em prefix} and {\em factor} in the following
way. 
If there exist $x,z \in
\Sigma^*$ and $w= xyz$, we say that $y$ is a factor of $w$. If $x =
\epsilon$, we also say that $y$ is a prefix of $w$. If $y$ is a
factor or prefix of $w$ and $y \neq w$, then $y$ is a {\em proper\/}
factor or a proper prefix of $w$, respectively.  

In addition to this standard notation, we also define some notation
specific to our problem. For a letter $a$, we denote its inverse by
$a^{-1}$, and we let the empty word, $\epsilon$, be the identity. We
consider only alphabets of the form $\Sigma = \Gamma \cup \Gamma^{-1}$,
where $\Gamma = \{ 1, 2, \ldots\}$ and $\Gamma^{-1} = \{ 1^{-1},
2^{-1}, \ldots\}$. For a word $w\in\Sigma^* = a_1a_2\cdots a_n$, we
denote its inverse by $w^{-1} = a_n^{-1}\cdots a_2^{-1}a_1^{-1}$, and
for a language $L\subseteq \Sigma^*$, we let $L^{-1} = \{w^{-1} : w \in L\}$.
Note that taking the inverse of a word is equivalent to
reversing it and then applying a homomorphism that maps each letter to
its inverse. Now, we introduce a reduction operation on words, consisting
of removing factors of the form $aa^{-1}$, with $a\in\Sigma$.
More formally, let us define the relation $\vdash{}\subseteq\Sigma^*\times\Sigma^*$
such that, for all $w,w'\in\Sigma^*$, $w\vdash w'$ if and only if
there exists $x,y\in\Sigma^*$ and $a\in\Sigma$ satisfying $w=xaa^{-1}y$ and $w'=xy$.
As usual, $\vdashstar$ denotes the reflexive and transitive closure of $\vdash$.

\begin{lemma}\label{lem:red}
	For each $w\in\Sigma^*$ there exists exactly one word $r(w)\in\Sigma^*$
	such that $w\vdashstar r(w)$ and $r(w)$ does not contain any factor of the
	form $aa^{-1}$, with $a\in\Sigma$.
\end{lemma}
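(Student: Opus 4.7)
The plan is to prove existence and uniqueness separately. Existence is immediate from termination: each single-step reduction $w \vdash w'$ strictly decreases the length by $2$, so any maximal chain $w \vdash w_1 \vdash w_2 \vdash \cdots$ halts, and the terminal word must avoid all factors of the form $aa^{-1}$ (else a further reduction would apply). This yields at least one $r(w)$ with $w\vdashstar r(w)$.

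For uniqueness, I would establish \emph{local confluence} of $\vdash$ and then invoke Newman's Lemma. Concretely, suppose $w\vdash u_1$ and $w\vdash u_2$ arise by deleting factors $aa^{-1}$ at positions $(i,i{+}1)$ and $bb^{-1}$ at positions $(j,j{+}1)$ respectively. If $|i-j|\geq 2$, the two factors are disjoint, and the reductions commute: $u_1$ and $u_2$ both reduce in one further step to the word obtained by deleting both factors simultaneously. If $|i-j|=1$, say $j=i+1$, the overlap forces $w[i+1]=a^{-1}=b$ and $w[i+2]=b^{-1}=a$, so the three-letter window is $aa^{-1}a$; each of the two reductions replaces this window with the single letter $a$, hence in fact $u_1=u_2$. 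The case $i=j$ is trivial. Together with termination, local confluence implies full confluence, so any two reduced words reachable from $w$ coincide.

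The one genuine subtlety is the overlap case: one must verify that overlapping occurrences of factors $aa^{-1}$ and $bb^{-1}$ can only meet in a window $aa^{-1}a$ (not, for instance, in some configuration where the single-step reductions lead to different shorter words). Once that case analysis is set down carefully, the rest is essentially bookkeeping. As an alternative to invoking Newman's Lemma one could describe $r(w)$ explicitly as the output of the obvious left-to-right stack algorithm (push each letter, but pop if it cancels the top) and then prove by induction on $|w|$ that every $\vdashstar$-path from $w$ terminates at this canonical word; this bypasses confluence machinery but shifts the burden to maintaining a stack invariant.
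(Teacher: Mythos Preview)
Your proposal is correct and follows essentially the same route as the paper: establish local confluence of $\vdash$ by the same three-case analysis on the relative positions of the two deleted factors (disjoint, identical, or overlapping in a window $aa^{-1}a$), observe termination since each step shortens the word by $2$, and then invoke Newman's Lemma to conclude confluence and hence uniqueness of the normal form. The paper does not pursue the stack-based alternative you mention, but your primary argument matches its proof almost step for step.
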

\begin{proof}
	First, we prove that if $w\vdash w'$ and $w\vdash w''$ then there
	exists $u\in\Sigma^*$ such that $w'\vdashstar u$ and $w''\vdashstar u$, i.e.,
	in the terminology of rewriting systems, $\Sigma^*$ with the relation $
	\vdash$ is a local confluent system.
	
	To this aim, suppose that $w=x'aa^{-1}y'=x''bb^{-1}y''$, $w'=x'y'$, $w''=x''y''$, 
	for some $x',x'',y',y''\in\Sigma^*$, $a,b\in\Sigma^*$, and, without loss
	of generality, that $|x'|\leq|x''|$.
	If $|x'|=|x''|$ then $w'=w''$, hence, we can take $u=w'$.
	Otherwise, if $|x'aa^{-1}|\leq|x''|$ then $w=x'aa^{-1}zbb^{-1}y''$, for
	some $z\in\Sigma^*$ and the desired word is $u=x'zy''$.
	In the only remaining case, $x'a=x''$, which implies that
	$b=a^{-1}$ and $w=x'aa^{-1}ay''$. Hence $w'=w''=x'ay''$, so we just take $u=w'$.
	
	Since $w\vdash w'$ implies that $|w'|=|w|-2$, no infinite
	reduction sequences are possible. By Newman's lemma~\cite{New}, this
	implies that the system is confluent, namely, for each $w\in\Sigma^*$ there
	exists exactly one word $r(w)$ such that $w\vdashstar r(w)$ and,
	for each $x\in\Sigma^*$, $r(w)\vdashstar x$ implies $x=r(w)$, i.e.,
	$r(w)$ does not contain any factor $aa^{-1}$.
\qed
\end{proof}
We define the \emph{reduced representation} of a word $w\in\Sigma^*$ as the
word $r(w)$ given in Lemma~\ref{lem:red}, i.e, the word which is obtained
from $w$ by repeatedly replacing with $\epsilon$ all factors 
of the form $aa^{-1}$, for any
letter $a\in\Sigma$, until no such factor exists. If
$r(w) = \epsilon$ we say that $w$ is \emph{reducible}. We can extend this
to languages so that for $L \subseteq \Sigma^*$, we let 
$r(L) = \{r(w): w\in L\}$.

Given a language $L$, the $\vdash$-closure of $L$ is the set of the
words which can be obtained by ``reducing'' words of $L$, i.e., the set
$\{x\in\Sigma^*:\exists w\in L\text{ s.t.\ } w\vdashstar x\}$.
Notice that $r(L)$ coincides with the intersection of the $\vdash$-closure of $L$
and $r(\Sigma^*)$.

Section 2 examines the reduced representations of regular and
context-free languages. Section 3 provides some bounds on the state
complexity of reduced representations. In Section 4 we look at bounds
on the length of the shortest word in a regular language that reduces
to $\epsilon$. Section 5 demonstrates counterexamples for some other
natural questions.

\section{Closure of Reduced Representation}
When considering the reduced representations of languages, it is
natural to wonder if common classes of languages are closed under this
operation. In this section we show that if a language $L$ is regular
then $r(L)$ is regular, but if $L$ is context-free then $r(L)$ does not
need to be context-free.

\begin{lemma}
\label{lem:sigstar}
For $\Sigma = \Gamma \cup \Gamma^{-1}$, where $\Gamma = \{ 1, \ldots, k\}$ and $\Gamma^{-1} = \{ 1^{-1}, \ldots, k^{-1}\}$, there exists a DFA
$M_k$ of $2k +2$ states that accepts $r(\Sigma^*)$.
\end{lemma}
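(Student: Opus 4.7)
The plan is to build the DFA $M_k$ directly by exploiting the fact, established in Lemma~\ref{lem:red}, that $r(\Sigma^*)$ is exactly the set of words over $\Sigma$ that contain no factor of the form $aa^{-1}$ with $a\in\Sigma$. Since $\Sigma = \Gamma \cup \Gamma^{-1}$, there are exactly $2k$ such forbidden two-letter patterns, namely $ii^{-1}$ and $i^{-1}i$ for $i=1,\ldots,k$. So the language in question is the complement of $\Sigma^{*}\{ii^{-1},i^{-1}i : 1 \le i \le k\}\Sigma^{*}$, a very restricted pattern-avoidance language, and a DFA only needs to remember the last letter read (and whether a forbidden factor has already appeared).

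Concretely, I would define $M_k = (Q,\Sigma,\delta,q_0,F)$ with
\[
Q = \{q_0\} \cup \{q_a : a \in \Sigma\} \cup \{q_d\},
\]
which has cardinality $1 + 2k + 1 = 2k+2$ as required. The initial state $q_0$ corresponds to ``no letter read yet,'' each $q_a$ records ``the last letter read was $a$,'' and $q_d$ is a sink representing ``some forbidden factor has already been seen.'' The transitions are the natural ones: $\delta(q_0,a) = q_a$ for every $a\in\Sigma$; for each $a\in\Sigma$, $\delta(q_a,a^{-1}) = q_d$ and $\delta(q_a,b) = q_b$ for every $b\in\Sigma\setminus\{a^{-1}\}$; and $\delta(q_d,a) = q_d$ for all $a$. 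The set of accepting states is $F = Q\setminus\{q_d\}$.

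Correctness follows by a straightforward induction on $|w|$: an easy invariant is that for every $w\in\Sigma^{*}$, $\delta(q_0,w) = q_d$ iff $w$ contains a factor $aa^{-1}$ for some $a\in\Sigma$, and otherwise $\delta(q_0,w) = q_0$ when $w = \epsilon$ and $\delta(q_0,w) = q_a$ when the last letter of $w$ is $a$. Combining this invariant with the characterization of $r(\Sigma^{*})$ from Lemma~\ref{lem:red} gives $L(M_k) = r(\Sigma^{*})$.

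There is no real obstacle here; the only thing worth paying attention to is the bookkeeping that $a^{-1}$ is well defined for every $a\in\Sigma$ (because $(i^{-1})^{-1} = i$), which is what makes both $ii^{-1}$ and $i^{-1}i$ uniformly handled by the single rule $\delta(q_a,a^{-1}) = q_d$, and the verification that the state count is exactly $2k+2$ rather than something larger.
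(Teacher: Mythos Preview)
Your proposal is correct and essentially identical to the paper's own proof: both observe that $r(\Sigma^*)$ consists of the words avoiding every factor $aa^{-1}$, and both build the $2k+2$-state DFA with one initial state, one dead state, and one state per letter recording the last symbol read, with the same transitions and accepting set. The only differences are notational (the paper names the dead state $q_{-1}$ rather than $q_d$) and that you spell out the inductive invariant a bit more explicitly.
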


\begin{proof}
We notice that a word $w$ is reduced if and only if it does not contain
the factor $aa^{-1}$, for each $a\in\Sigma$.
This condition can be verified by defining an automaton $M_k$
that remembers in its finite control the last input letter.
To this aim, the automaton has a state $q_a$ for each $a\in\Sigma$.
If in the state $q_a$ the symbol $a^{-1}$ is received, then the
automaton reaches a dead state $q_{-1}$.

Formally, $M_k$ is the DFA $(Q, \Sigma, \delta, q_0, F)$ 
defined as follows (see Figure~\ref{fig:sigstar} for an example):
$Q = \{q_0, q_{-1}\} \cup \{q_i : i
\in \Gamma \cup \Gamma^{-1}\}$, $F = Q\setminus \lbrace q_{-1}\rbrace$, and
\begin{equation}
\label{eq:deltasigstar}
\delta(q_a, c) =
\begin{cases}
q_c, & \text{if }a \neq c^{-1} \text{ and } q_a \neq q_{-1}; \\
q_{-1}, & \text{otherwise}.
\end{cases} \nonumber
\end{equation}
\begin{figure}
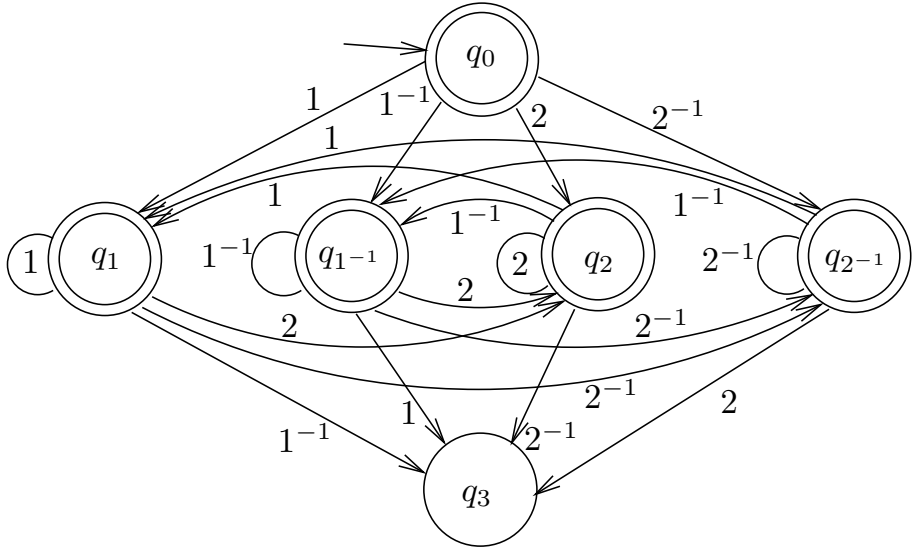

\begin{center}
\resizebox{\columnwidth}{!}{\input rsigstar.pstex_t}
\caption{
	$M_2$, a DFA that accepts $r(\Sigma^*)$ for $|\Sigma| = 4$. 
}
\label{fig:sigstar}
\end{center}
\end{figure}
\qed
\end{proof}

\begin{lemma}
\label{lem:clos}
Given an $\epsilon$-NFA $M = (Q, \Sigma, \delta, q_0, F)$ with $n$ states,
an automaton $M'$ accepting the $\vdash$-closure of $L(M)$
can be built in time $O(n^4)$.
\end{lemma}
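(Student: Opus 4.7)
The plan is to build $M'$ by taking $M$ and adding $\epsilon$-transitions that ``jump over'' reducible factors. Specifically, let $T \subseteq Q \times Q$ be the set of pairs $(p,q)$ for which some word $w$ with $w \vdashstar \epsilon$ satisfies $q \in \delta^*(p,w)$; then $M'$ has the same states, non-$\epsilon$ transitions, initial state and accepting set as $M$, together with an $\epsilon$-transition from $p$ to $q$ for every $(p,q) \in T$ (on top of those already in $M$). To verify that $L(M')$ is the $\vdash$-closure of $L(M)$, I would use the elementary fact that $w \vdashstar x$ holds if and only if $w$ admits a factorisation $w = x_0 y_1 x_1 \cdots y_m x_m$ with $x = x_0 x_1 \cdots x_m$ and each $y_i$ reducible, which is straightforward by induction on the number of single reduction steps. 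Under this factorisation, an accepting run of $M$ on $w$ translates into an accepting run of $M'$ on $x$ by replacing the $M$-subpath reading each $y_i$ by the corresponding $\epsilon$-transition, and conversely every accepting run of $M'$ on $x$ expands back to an accepting run of $M$ on some $w$ with $w \vdashstar x$.

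The work is then to compute $T$ inside the $O(n^4)$ budget. I would characterise $T$ as the least subset of $Q \times Q$ closed under
\begin{enumerate}
\item $(p,p) \in T$ for every $p \in Q$;
\item if $a \in \Sigma$, $p_1 \in \delta(p,a)$, $(p_1,q_1) \in T$ and $q \in \delta(q_1,a^{-1})$, then $(p,q) \in T$;
\item if $(p,r) \in T$ and $(r,q) \in T$, then $(p,q) \in T$.
\end{enumerate}
Correctness of this closure follows from the obvious context-free grammar $S \to \epsilon \mid aSa^{-1} \mid SS$ (ranging over $a \in \Sigma$) that generates exactly the reducible words: any reducible word is either empty, nested inside a matching pair, or a concatenation of two shorter reducible words.

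I would then compute $T$ with a standard worklist algorithm. Since $|T| \leq n^2$, at most $n^2$ pairs are ever inserted. Each newly inserted pair $(p_1,q_1)$ can propagate via Rule 2 to at most $O(n)$ predecessors $p$ of $p_1$ (on each symbol $a$) and $O(n)$ successors $q$ of $q_1$ (on $a^{-1}$), giving $O(n^2)$ work per pair once the alphabet is treated as constant, and via Rule 3 to at most $O(n)$ partners in each direction, for $O(n)$ further work. Summing over the $O(n^2)$ pairs yields the $O(n^4)$ bound, and converting $T$ into $\epsilon$-transitions of $M'$ costs only $O(n^2)$ extra. The main point requiring care is the worklist bookkeeping: I would maintain, for each state, both its incoming and outgoing transition lists (indexed by letter) and the rows and columns of $T$ already discovered, so that each rule application is charged to the most recently added premise and each pair is inserted into $T$ only once; with this discipline the per-pair cost analysis above is tight.
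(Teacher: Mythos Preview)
Your construction of $M'$ is the same as the paper's: keep $M$ intact and add an $\epsilon$-transition from $p$ to $q$ whenever some reducible word labels a path from $p$ to $q$ in $M$; the argument that $L(M')$ is then the $\vdash$-closure of $L(M)$ is likewise parallel. Where you diverge is in the algorithm used to compute the set $T$ of such pairs. The paper imitates the Hopcroft--Ullman table-filling technique for DFA minimisation: it loops over all state $4$-tuples $(p,q,s,t)$ with $s\in\delta(p,a)$ and $q\in\delta(t,a^{-1})$, attaches $(p,q)$ to a deferred list $l(s,t)$ when $(s,t)$ is not yet known to be in $T$, and maintains transitivity via an incremental transitive-closure data structure with $O(n)$ amortised cost per edge insertion. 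Your approach is instead a direct worklist saturation of the CFL-reachability system induced by the grammar $S\to\epsilon\mid aSa^{-1}\mid SS$. Both reach $O(n^4)$ by essentially the same accounting (at most $n^2$ pairs, each triggering $O(n^2)$ work through the nesting rule); yours is more self-contained, while the paper's buys a cleaner separation between the ``wrapping'' step and the transitive-closure maintenance, at the price of importing an external data-structure result.

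One small repair is needed. Because $M$ is an $\epsilon$-NFA, your Rule~1 must seed $T$ with every pair $(p,q)$ such that $q$ lies in the $\epsilon$-closure of $p$, not just the diagonal pairs $(p,p)$; otherwise your least-fixed-point description of $T$ omits pairs connected by a reducible word whose accepting run uses $\epsilon$-moves of $M$. With that seeding, Rules~2 and~3 then correctly absorb $\epsilon$-moves occurring before, after, or between the matched letters (compose via Rule~3 with the appropriate $\epsilon$-closure pair). The paper handles this explicitly by initialising its edge set $E$ to the transitive closure of the $\epsilon$-transitions of $M$; the extra cost is $O(n^3)$ and does not disturb the $O(n^4)$ bound.
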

\begin{proof}
The idea behind the proof is to present an algorithm that given $M$ 
computes $M'$ by adding to $M$ $\epsilon$-transitions corresponding to paths on reducible words. 
The algorithm is similar to a well known algorithm for minimizing DFAs \cite[p.\ 70]{HopUll}.
It uses a directed graph $G=(Q,E)$ to remember $\epsilon$-transitions.
For each pair of states $s,t$, the algorithm also keep a set
$l(s,t)$ of pairs of states, with the following meaning: if $(p,q)\in l(s,t)$ and the algorithm discovers that there is a path from $s$ to $t$ on
a reducible word (and hence it adds the edge $(s,t)$ to $G$),
then there exists a path from $p$ to $q$ on a reducible word (thus,
the algorithm can also add the edge $(p,q)$).
{
\begin{tabbing}
$E\leftarrow$ transitive closure of $\{(p,q)\mid q\in\delta(p,\epsilon)\}$\\
{\bf for} \= $s,t\in Q$ {\bf do} $l(s,t)\leftarrow\emptyset$\\
{\bf for} \= $p,q,s,t\in Q$ {\bf do}\\
	\>{\bf if} \=$\exists a\in\Sigma$ s.t.\ $s\in\delta(p,a)$ and
	$q\in\delta(t,a^{-1})$ {\bf then}\\
	\>\>{\bf if} $(s,t)\in E$ \= {\bf then} \= $E\leftarrow update(E,(p,q))$\\
	\>\>\>{\bf else} \> $l(s,t)\leftarrow l(s,t)\cup\{(p,q)\}$
\end{tabbing}
}
\noindent
The subroutine $update$ returns the smallest set $E'$ having the following
properties:
\begin{itemize}
\item $E\cup\{(p,q)\}\subseteq E'$;
\item if $(p',q')\in E'$ then each element belonging to $l(p',q')$ is in $E'$;
\item the graph $(Q,E')$ is transitive.
\end{itemize}
At the end of the execution, the automaton $M'$ is obtained by adding to $M$ an 
$\epsilon$-transition from a state $p$ to a state $q$ for each edge $(p,q)$ in the 
resulting graph~$G$.

\smallskip
Now, we show that the language accepted by $M'$ is the $\vdash$-closure of $L(M)$.
To this aim, we observe that for all $p,q,s,t\in Q$, such that $s\in\delta(p,a)$ and
$q\in\delta(t,a^{-1})$ are transitions of $M$ for some $a\in\Sigma$, 
if $M'$ contains an $\epsilon$-transition
from $s$ to $t$ then it must contain also an $\epsilon$-transition from $p$ to $q$.
In fact, when the algorithm examines these 4 states in the loop, if $(s,t)$ is in $E$ then
the algorithm calls $update$ to add $(p,q)$ to $E$. Otherwise, the algorithm adds
$(p,q)$ to $l(s,t)$. Since $M'$ finally contains the $\epsilon$-transition
from $s$ to $t$, then there is a step of the algorithm, after the insertion of 
$(p,q)$ in $l(s,t)$, adding the pair $(s,t)$ to $E$. 
The only part of the algorithm able to perform this operation is the
subroutine $update$.\footnote{Notice that the pair $(s,t)$ can be
added to $E$ by the subroutine $update$ either because it is the
second argument in the call of the subroutine, or because it belongs
to a list $l(p',q')$, where $(p',q')$ is added to $E$ in the same
call, or because there is a path from $s$ to $t$ consisting of some
arcs already in $E$ and at least one arc added during the same call of $update$.}
But when the subroutine adds the pair $(s,t)$
to $E$ then it must add all the pairs in $l(s,t)$. Hence,
$M'$ must also contain an $\epsilon$-transition from $p$ to $q$.
As a consequence, if $w\in L(M')$ and $w\vdash w'$ then $w'\in L(M')$, i.e.,
$L(M')$ is closed under $\vdash$.
Since the algorithm does not remove the original 
transitions from $M$, $L(M)\subseteq L(M')$ and, hence, the $\vdash$-closure
of $L(M)$ is included in $L(M')$.

On the other hand, it can be easily shown
that for each $\epsilon$-transition of $M'$ from a state $p$ to a state
$q$ there exists a reducible word $z$ such that $q\in\delta(p,z)$ in $M$.
Using this argument, from each word $w\in L(M')$ we can find a word
$x\in L(M)$ such that $x\vdashstar w$. This permit us to conclude that
$L(M')$ accepts the $\vdash$-closure of $L(M)$.

\smallskip
Now we show that the algorithm works in $O(n^4)$ time. A naive analysis gives a running time growing at least as $n^4$. In fact, the second for-loop
iterates over all state 4-tuples. Inside the loop the most expensive
step is the subroutine $update$. This subroutine starts by adding an
edge $(p,q)$ to $E$. For each new edge $(p',q')$ added to $E$ the subroutine
has to add all the edges in $l(p',q')$, while keeping the graph transitive.
This seems to be an expensive part of the computation. However, we can
observe that each set $l(p',q')$ contains less than $n^2$ elements. 
Furthermore, a set $l(p',q')$ is examined only once during the
execution of the algorithm, namely when $(p',q')$ is added to $E$. Hence, the
total time spent while examining the sets $l$ in all the calls of
the subroutine $update$ is $O(n^4)$. Furthermore, no more that $n^2$ edges can be inserted into $G$, and each insertion can be done in $O(n)$ amortized time while maintaining the transitive closure \cite{Ital,Pou}.  
Summing up, we get that the overall time of the algorithm is $O(n^4)$.
\qed
\end{proof}

By combining the results in the previous lemmata, we are now able to
show the following:

\begin{proposition}
\label{prop:regalg}
Given an $\epsilon$-NFA $M = (Q, \Sigma, \delta, q_0, F)$ with $n$ states,  an $\epsilon$-NFA $M_r$ such that $L(M_r) = r(L(M))$ can be built in $O(n^4)$ time.
\end{proposition}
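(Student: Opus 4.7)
The plan is to use the identity noted just before Section~2: namely, $r(L) = C \cap r(\Sigma^*)$, where $C$ is the $\vdash$-closure of $L$. Given this identity, the construction is just an automaton intersection, and both factors have already been handled by the preceding lemmas.

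First, I would apply Lemma~\ref{lem:clos} to $M$ to obtain in $O(n^4)$ time an $\epsilon$-NFA $M'$ with $L(M') = C$; this $M'$ has $n$ states, since the algorithm of Lemma~\ref{lem:clos} only adds $\epsilon$-transitions to $M$. Second, I would invoke Lemma~\ref{lem:sigstar} to build the DFA $M_k$ of $2k+2$ states that accepts $r(\Sigma^*)$; this is $O(|\Sigma|)$ work. Third, I would form $M_r$ as the standard product $\epsilon$-NFA of $M'$ and $M_k$, treating $\epsilon$-moves on the $M'$-component as leaving the $M_k$-component unchanged, and declaring a product state accepting iff both components are. The product has $O(n|\Sigma|)$ states, and its construction is dominated by the $O(n^4)$ cost of the first step since $M_k$ has size independent of $n$.

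Correctness reduces to the identity itself, which I would verify in two lines. If $x = r(w)$ for some $w \in L$, then $w \vdashstar x$ shows $x \in C$, and $x$ is reduced by definition, so $x \in r(\Sigma^*)$. Conversely, if $x \in C$ there is some $w \in L$ with $w \vdashstar x$; if additionally $x \in r(\Sigma^*)$, then $x$ is reduced, and by the uniqueness part of Lemma~\ref{lem:red} we must have $x = r(w)$, so $x \in r(L)$.

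I expect no real obstacle here: the proposition is essentially a packaging of Lemmas~\ref{lem:sigstar} and~\ref{lem:clos} through the above set-theoretic identity. The only item that needs a brief check is that the product construction does not exceed the $O(n^4)$ time budget, which is immediate because $M_k$ has constant size relative to $n$ and the transition table of the product can be written down in time linear in $n|\Sigma|^2$ plus the $\epsilon$-transition edges of $M'$.
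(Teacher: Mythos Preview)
Your proposal is correct and follows essentially the same route as the paper: use the identity $r(L(M))=\text{($\vdash$-closure of $L(M)$)}\cap r(\Sigma^*)$, invoke Lemma~\ref{lem:clos} for the closure and Lemma~\ref{lem:sigstar} for $r(\Sigma^*)$, and then take a standard product, with the $O(n^4)$ cost dominated by the closure step. Your extra remarks (that $M'$ still has $n$ states, and the explicit verification of the identity via Lemma~\ref{lem:red}) are fine and slightly more detailed than the paper's version.
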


\begin{proof}
The language $r(L(M))$ is the intersection of the  
$\vdash$-closure of $L(M)$ and $r(\Sigma^*)$.
According to Lemma~\ref{lem:clos}, from $M$ we
build an automaton $M'$ accepting the $\vdash$-closure of $L(M)$.
Hence, using standard constructions, from $M'$ and the automaton
obtained in Lemma~\ref{lem:sigstar} (whose size is fixed,
if the input alphabet is fixed), we get the automaton
$M_r$ accepting $r(M)=L(M')\cap r(\Sigma^*)$. 

The most expensive part is the construction of $M'$, which
uses $O(n^4)$ time.
\qed
\end{proof}

\begin{corollary}
\label{cor:regclosure}
For any $L \subseteq \Sigma^*$, if $L$ is regular then $r(L)$ is regular.
\end{corollary}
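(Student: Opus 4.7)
The plan is to obtain the corollary as an immediate consequence of Proposition~\ref{prop:regalg}. First I would observe that, by the definition of regularity, $L$ is accepted by some finite automaton, and in particular by some $\epsilon$-NFA $M$ with $L(M)=L$. Applying Proposition~\ref{prop:regalg} to this $M$ produces an $\epsilon$-NFA $M_r$ satisfying $L(M_r)=r(L(M))=r(L)$. Since $r(L)$ is accepted by an $\epsilon$-NFA, it is regular, which is exactly the statement of the corollary.

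There is essentially no obstacle to overcome: all the technical work has already been carried out in the preceding lemmata and in the proposition itself. Lemma~\ref{lem:sigstar} provides the constant-size DFA $M_k$ for $r(\Sigma^*)$, Lemma~\ref{lem:clos} constructs an automaton for the $\vdash$-closure of a regular language from an $\epsilon$-NFA recognizing it, and Proposition~\ref{prop:regalg} combines these two ingredients via the identity $r(L)=(\text{$\vdash$-closure of }L)\cap r(\Sigma^*)$. The corollary is the qualitative, language-theoretic shadow of that effective construction.

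If one preferred to avoid quoting Proposition~\ref{prop:regalg} directly, the same conclusion can be assembled in one line from the closure properties of the class of regular languages: Lemma~\ref{lem:clos} shows that the $\vdash$-closure of a regular language is regular, Lemma~\ref{lem:sigstar} shows that $r(\Sigma^*)$ is regular, and regular languages are closed under intersection. Either route gives the result, and no further case analysis or automaton construction is required here.
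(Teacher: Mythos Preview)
Your proposal is correct and matches the paper's approach exactly: the corollary is stated without proof in the paper, as an immediate consequence of Proposition~\ref{prop:regalg}, and your argument spells out precisely that implication. The additional alternative route via Lemmata~\ref{lem:sigstar} and~\ref{lem:clos} plus closure under intersection is also valid and merely unfolds the proof of Proposition~\ref{prop:regalg} itself.
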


Now we turn our attention to context-free languages and prove that the analogue of Corollary \ref{cor:regclosure} does not hold. For this we use the notion of a quotient of two languages.

\begin{definition}
Given $L_1, L_2 \subseteq \Sigma^*$, the quotient of $L_1$ by $L_2$ is $$L_1/L_2 = \{w : \exists x \in L_2 \ {\rm such\ that } \ wx\in L_1\}$$
\end{definition}
While the class of regular languages is closed under quotients,
the class of context-free languages is not closed under this 
operation~\cite{Gins}.
It turns out that the reduced representation of a language can be used to compute quotients.
\begin{lemma}
\label{lem:quotient}
For any two languages $L_1,L_2\subseteq\Gamma^*$, the language $L_3 = r(L_1 L_2^{-1}) \cap \Gamma^*$ equals the quotient $L_1/L_2$.
\end{lemma}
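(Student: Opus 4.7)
The plan is to prove the set equality by showing both inclusions, leveraging the uniqueness of the reduced representation from Lemma~\ref{lem:red} to transfer free-group identities back into identities in $\Gamma^*$.

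For the inclusion $L_1/L_2 \subseteq r(L_1 L_2^{-1}) \cap \Gamma^*$, I would take an arbitrary $w \in L_1/L_2$, so there exists $x \in L_2$ with $wx \in L_1$. Then $(wx) x^{-1} \in L_1 L_2^{-1}$, and since $x \in \Gamma^*$ one checks directly that $x x^{-1} \vdashstar \epsilon$. Combined with the fact that $w \in \Gamma^*$ already contains no factor of the form $a a^{-1}$, this yields $r(wx \cdot x^{-1}) = w$. Hence $w \in r(L_1 L_2^{-1}) \cap \Gamma^*$.

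For the reverse inclusion, I would pick $w \in r(L_1 L_2^{-1}) \cap \Gamma^*$ and write $w = r(u v^{-1})$ for some $u \in L_1$, $v \in L_2$. The natural candidate for an element of $L_2$ witnessing membership in the quotient is $v$ itself, and the whole proof reduces to showing that $wv = u$ as words in $\Gamma^*$. The idea is to start from the word $u v^{-1} v$ and apply confluence: on the one hand, repeatedly cancelling the factors of the form $a a^{-1}$ inside the $v^{-1} v$ block (using that $v \in \Gamma^*$) gives $u v^{-1} v \vdashstar u$; on the other hand, applying the reduction sequence that witnesses $u v^{-1} \vdashstar w$ while leaving the appended $v$ untouched gives $u v^{-1} v \vdashstar w v$. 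Both $u$ and $wv$ lie in $\Gamma^*$, so both are already reduced, and the uniqueness part of Lemma~\ref{lem:red} forces $u = wv$. This produces $wv \in L_1$ with $v \in L_2$, hence $w \in L_1/L_2$.

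The only place requiring care is the second direction, where one must avoid the tempting but vacuous equation ``$u v^{-1} = w$ in the free group, so $u = wv$'' and instead justify that identity inside the rewriting system $(\Sigma^*, \vdash)$. The common-ancestor argument via $u v^{-1} v$ together with Lemma~\ref{lem:red} handles this cleanly, and no further machinery is needed.
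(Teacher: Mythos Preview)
Your proof is correct and follows essentially the same approach as the paper's, which also rests on the identity $r(wxx^{-1}) = w$ for $w,x \in \Gamma^*$. Your treatment of the reverse inclusion via the common ancestor $uv^{-1}v$ and confluence (Lemma~\ref{lem:red}) is in fact more explicit than the paper's, which asserts the biconditional directly without spelling out why $r(uv^{-1}) = w \in \Gamma^*$ forces $u = wv$.
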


\begin{proof}
We notice that $r(wxx^{-1}) = w$, for each $w,x\in\Gamma^*$.
Hence, given $w\in\Gamma^*$, it holds that $w\in L_3 = r(L_1 L_2^{-1}) \cap \Gamma^*$ if and only if there exists $x\in L_2$ such that $wx\in L_1$. Therefore $L_3 = L_1/L_2$.
\qed
\end{proof}

\begin{corollary}
\label{prop:cfclosure}
The class of context-free languages is not closed under $r()$.
\end{corollary}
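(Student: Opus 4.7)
The plan is to derive the corollary by contradiction, leveraging Lemma~\ref{lem:quotient} together with the classical failure of context-free languages to be closed under quotient by a context-free language (as cited to Ginsburg in the preceding paragraph).

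Concretely, I would first fix two context-free languages $L_1,L_2\subseteq\Gamma^*$ whose quotient $L_1/L_2$ is not context-free; the existence of such a pair is exactly the Ginsburg result already invoked in the text. Next I would observe that $L_2^{-1}$ is a context-free language, since taking the inverse of every word amounts to reversing the language and then applying the letter-to-inverse homomorphism, and context-free languages are closed under both reversal and homomorphism. Consequently $L_1L_2^{-1}$ is context-free, by closure of the context-free class under concatenation.

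Now I would argue by contradiction: suppose that the class of context-free languages were closed under $r(\cdot)$. Then $r(L_1L_2^{-1})$ would be context-free, and intersecting with the regular language $\Gamma^*$ (closure of context-free languages under intersection with regular languages) would yield a context-free language $r(L_1L_2^{-1})\cap\Gamma^*$. But by Lemma~\ref{lem:quotient} this intersection coincides with $L_1/L_2$, which is not context-free, a contradiction.

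The only real subtlety is the preliminary observation that $L_2^{-1}$ is context-free; once that is in hand, the rest is pure closure bookkeeping. There is no combinatorial content to grind through here — the proof is essentially a one-line reduction from the failure of quotient-closure, made possible by Lemma~\ref{lem:quotient}.
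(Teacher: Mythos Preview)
Your proposal is correct and matches the paper's proof almost verbatim: both argue by contradiction, use closure of context-free languages under reversal, homomorphism, concatenation, and intersection with regular languages to conclude that $r(L_1L_2^{-1})\cap\Gamma^*$ would be context-free, and then invoke Lemma~\ref{lem:quotient} to identify this with $L_1/L_2$, contradicting the Ginsburg--Spanier result. The only cosmetic difference is that you fix a specific witness pair $L_1,L_2$ upfront, while the paper phrases it as deriving closure under quotient for all pairs; the content is identical.
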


\begin{proof}
By contradiction, suppose that the class of context-free languages
is closed unded $r()$.
Since this class is closed under the operations of
reversal, morphism, concatenation and intersection with a regular
language, for any two context-free languages $L_1$ and $L_2$ over
$\Gamma$, the language $L_3 = r(L_1 L_2^{-1}) \cap \Gamma^*$ is also
context-free. However, from Lemma \ref{lem:quotient}, $L_3 = L_1/L_2$,
implying that the class of context-free languages
would be closed under quotient, a contradiction.
\qed
\end{proof}

\section{State Complexity of Reduced Representation}
Here we look at some bounds on the state complexity of the reduced representation of a regular language.

\begin{proposition}
\label{prop:scu}
For any $\epsilon$-NFA, $M  = (Q,\Gamma \cup \Gamma^{-1},\delta,q_0,F)$ with $n$ states, $\Gamma = \{ 1, 2, \ldots, k\}$ and $\Gamma^{-1} = \{ 1^{-1}, 2^{-1}, \ldots, k^{-1}\}$ for some positive integer $k$, there exists a DFA of at most $2^n(2k+2)$ states that accepts $r(L(M))$.
\end{proposition}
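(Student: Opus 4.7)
The plan is to combine the constructions from Lemma~\ref{lem:sigstar} and Lemma~\ref{lem:clos} more carefully than in Proposition~\ref{prop:regalg}, exploiting the fact that the automaton $M'$ produced by Lemma~\ref{lem:clos} uses exactly the same state set $Q$ as the input automaton $M$. That lemma only \emph{adds} $\epsilon$-transitions to $M$; it never introduces new states. Hence we start with an $\epsilon$-NFA $M'$ on $n$ states that recognizes the $\vdash$-closure of $L(M)$.

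The first step is then to apply the standard subset construction to $M'$, producing an equivalent DFA $D$ on at most $2^n$ states that recognizes the $\vdash$-closure of $L(M)$. (Epsilon-transitions are handled in the usual way via $\epsilon$-closures, so the bound $2^n$ is unaffected.)

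Next, recall from the paragraph preceding Section~2 that
\[
r(L(M)) \;=\; \bigl(\vdash\text{-closure of }L(M)\bigr) \,\cap\, r(\Sigma^*).
\]
By Lemma~\ref{lem:sigstar}, $r(\Sigma^*)$ is accepted by a DFA $M_k$ on exactly $2k+2$ states. So I form the standard product DFA $D \times M_k$, whose state set has size at most $2^n\cdot(2k+2)$ and which accepts precisely $L(D)\cap L(M_k) = r(L(M))$. This gives the required bound.

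The only real observation needed, and the easy-to-miss point, is that the $\epsilon$-NFA from Lemma~\ref{lem:clos} has $n$ states rather than some larger quantity related to the $O(n^4)$ running time; once that is noted, everything else is routine subset construction and product automaton bookkeeping. There is no significant obstacle beyond this observation.
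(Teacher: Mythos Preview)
Your proof is correct and follows essentially the same approach as the paper: build the $n$-state $\epsilon$-NFA $M'$ for the $\vdash$-closure via Lemma~\ref{lem:clos}, determinize it to at most $2^n$ states, and take the product with the $(2k+2)$-state DFA of Lemma~\ref{lem:sigstar}. The key observation you highlight---that Lemma~\ref{lem:clos} only adds $\epsilon$-transitions and hence preserves the state count---is exactly the point the paper relies on as well.
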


\begin{proof}
The upper bound follows from the algorithm used to prove
Proposition~\ref{prop:regalg}. 
The first part of the construction (i.e., the construction of the
automaton $M'$ accepting the $\vdash$-closure of the language
accepted by $M$) does not increase the number of states.
The resulting automaton $M'$ can be converted into a DFA with
$2^n$ states. Finally, to get an automaton accepting $r(L(M))$
we apply the usual cross-product construction to this automaton and
to the DFA with $2k+2$ states accepting $r(\Sigma^*)$ obtained
in Lemma~\ref{lem:sigstar}. The intersection results in a DFA of no more than $2^n(2k+2)$ states.
\qed
\end{proof}

Since each DFA is \emph{a fortiori} an $\epsilon$-NFA, the previous
proposition gives an upper bound for the state complexity of
the reduced representation.

\section{Length of Shortest String Reducing to the Empty Word}

Another interesting question is: given a DFA $M$ of $n$ states such
that $\epsilon \in r(L(M))$, what is the shortest $w \in L(M)$ such
that $r(w) = \epsilon$? We provide upper and lower bounds, and we
examine the special case of small alphabets. First we provide an upper
bound.

\begin{proposition}
\label{prop:lssu}
For any NFA $M= (Q,\Sigma,\delta,q_0,F)$ with $n$ states such that there exists $w \in L(M)$ with $r(w) = \epsilon$, there exists $w^\prime \in L(M)$ such that $|w^\prime| \leq 2^{n^2-n}$ and $r(w') = \epsilon$.
\end{proposition}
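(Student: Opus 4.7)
The plan is to reduce the problem to a combinatorial recurrence on pairs of states of $M$, exploiting the Dyck-like grammatical structure of reducible words. First, I would observe that the set of reducible words $\{w \in \Sigma^* : r(w) = \epsilon\}$ is generated by the context-free grammar $S \to \epsilon \mid a\,S\,a^{-1}\,S$ for $a \in \Sigma$, so any non-empty reducible word $z$ decomposes as $z = a\,z_1\,a^{-1}\,z_2$ with $z_1, z_2$ themselves reducible. Setting
\[
R = \{(p,q) \in Q \times Q : \exists\, \text{reducible } z \text{ with } q \in \delta(p,z)\},
\]
the hypothesis yields $(q_0, f) \in R$ for some $f \in F$, and $R$ admits the inductive characterization: $(p,q) \in R$ iff $p = q$, or there exist $a \in \Sigma$ and $p',p'',p''' \in Q$ with $p' \in \delta(p,a)$, $p''' \in \delta(p'', a^{-1})$, and both $(p',p''), (p''',q) \in R$.

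Next, let $\ell(p,q)$ denote the minimum length of a reducible word taking $p$ to $q$; it suffices to bound $\ell(q_0, f)$. We have $\ell(p,p) = 0$, and the decomposition of any shortest witness yields $\ell(p,q) = 2 + \ell(p',p'') + \ell(p''',q)$ for some valid sub-pairs, whose $\ell$-values are strictly smaller than $\ell(p,q)$ (since $z_1, z_2$ are strictly shorter than $z$).

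To finish, I would sort the pairs of $R$ by increasing $\ell$-value. The $n$ diagonal pairs occupy the bottom positions with $\ell = 0$, leaving at most $n^2 - n$ non-diagonal pairs. By the strict-decrease property, the $i$-th non-diagonal pair satisfies $\ell_i \leq 2 + 2\ell_{i-1}$, with $\ell_1 \leq 2$ (the first non-diagonal pair can only decompose via diagonal sub-pairs). A careful solution of this doubling recurrence over at most $n^2 - n$ steps produces the bound $\ell(q_0, f) \leq 2^{n^2 - n}$, and the reducible word realizing $\ell(q_0, f)$ is the desired $w'$.

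The main obstacle is pinning down the exact exponent $n^2 - n$ rather than a looser $O(2^{n^2})$. The doubling recurrence must be solved with care: the key observations are that the $n$ diagonal pairs are essentially ``free'' (length $0$), that the sub-pairs in any minimal decomposition appear strictly earlier in the ordering (so the induction is well founded), and that the additive constant $+2$ can be absorbed when solving the recurrence over the remaining $n^2 - n$ pairs.
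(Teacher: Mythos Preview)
Your approach is essentially the same as the paper's: build a parse tree for a reducible word, label internal nodes by state pairs, and use pigeonhole on root-to-leaf paths to bound the height by $n^2-n$. However, your specific decomposition does not deliver the exact exponent you claim. With the grammar $S\to aSa^{-1}S\mid\epsilon$, the recurrence $\ell_i\le 2+2\ell_{i-1}$ (with $\ell_0=0$) solves to $\ell_i\le 2^{i+1}-2$, hence after $n^2-n$ steps you get only $|w'|\le 2^{\,n^2-n+1}-2$, a factor of two above the target. Equivalently, in your binary recursion tree every internal node contributes $+2$ to $|w|$, and a binary tree of height $n^2-n$ can have up to $2^{n^2-n}-1$ internal nodes, giving the same overshoot.

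The paper avoids this by using a two-rule decomposition rather than your single rule: a reducible nonempty word is parsed either as $uv$ with $u,v$ both reducible (two internal children, no symbols contributed) or as $aua^{-1}$ (one internal child, two leaf symbols contributed). Now the number of leaves satisfies $L(1)=2$ and $L(k)=\max(2L(k-1),\,2+L(k-1))=2L(k-1)$, so $L(k)=2^k$ exactly. The same pigeonhole argument on state-pair labels (distinct and off-diagonal along every path of internal nodes) bounds the height by $n^2-n$, yielding $|w'|\le 2^{n^2-n}$ on the nose. In short, your ``main obstacle'' is real: the $+2$ cannot simply be absorbed in your recurrence, and separating the concatenation rule $S\to SS$ from the bracketing rule $S\to aSa^{-1}$ is precisely what recovers the missing factor of two.
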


\begin{proof}
Suppose $M$ accepts $w \in \Sigma^+$ such that $r(w) = \epsilon$. Then $w$ can be decomposed in at least one of two ways. Either there exist $u,v \in \Sigma^+$ such that $w = uv, r(u) = \epsilon$ and $r(v) = \epsilon$ (Case 1), or there exist $u \in \Sigma^*, a \in \Sigma$ such that $w = au a^{-1}$ and $r(u) = \epsilon$ (Case 2). Any factor $w^\prime$ of $w$ such that $r(w^\prime) = \epsilon$ can also be decomposed in at least one of these two ways, so we can recursively decompose $w$ and the resulting factors until we have decomposed $w$ into single symbols. So, we can specify a certain type of parse tree such that $M$ accepts $w \in \Sigma^*$ with $r(w) = \epsilon$ if and only if we can build this type of parse tree for $w$. 

Define our parse tree for a given $w$ as follows. Every internal node corresponds to a factor $w^\prime$ of $w$ such that $r(w^\prime) = \epsilon$, and the root of the whole tree corresponds to $w$. The leaves store individual symbols. When read from left to right, the symbols in the leaves of any subtree form the word that corresponds to the root of the subtree. Each internal node is of one of two types:
\begin{enumerate}
\item The node has two children, both of which are internal nodes that serve as roots of subtrees (corresponds to Case 1).
\item The node has three children, where the left and the right children are single symbols that are inverses of each other, and the child in the middle is empty or it is an internal node that is the root of another subtree (corresponds to Case 2).
\end{enumerate}

An example is shown in Figure \ref{fig:lssu}. 
Now, we fix an accepting computation of $M$ on input $w$. 
We label each internal node $t$ with a pair of states $p,q\in Q$ 
such that if
$w'$ is the factor of $w$ that corresponds to the subtree rooted
at $t$, and $w=xw'y$, then $p\in\delta(q_0, x)$ and $q\in\delta(p, w')$
are the states reached after reading the input prefixes $x$ and $xw'$,
respectively, during the accepting computation under consideration.
(This also implies that $q_f\in\delta(q,y)$, with $q_f\in F$, and
$(q_0,q_f)$ is the label associated with the root of the tree.)

\begin{figure}[H]
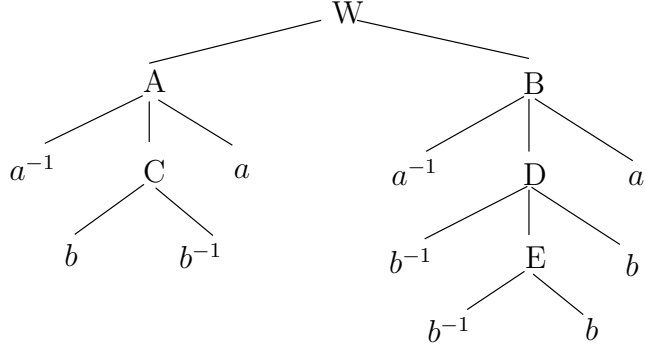

\begin{center}
\input parsetree.pstex_t
\end{center}
\caption{
	An example parse tree for the word $w = a^{-1}bb^{-1}aa^{-1}b^{-1}b^{-1}bba$, without the state pair labels. 
}
\label{fig:lssu}
\end{figure}

If the parse tree of $w$ has two nodes $t$ and $u$ with the same state-pair
label such that $u$ is a descendent of $t$, then there exists a word
shorter than $w$ which is accepted and reduces to the empty word. This is
because we can replace the subtree rooted at $t$ with the subtree
rooted at $u$. 
Furthermore, if an internal node $t$ is labeled with a pair $(q,q)$,
for some $q\in Q$, then the factor $w'$ corresponding to the subtree
rooted at $t$ can be removed from $w$, obtaining a shorter reducible
word.
Hence, by a pigeonhole argument, we conclude that the height of the
subtree corresponding to the shortest reducible word $w$ is at most
$n^2-n$. 
We now observe that the number of leaves of a parse tree
of height $k$ defined according to our rules is at most $2^k$.
(Such a tree is given by the complete binary tree of
height $k$, which has no nodes with three children. The avoidance of
nodes with three children is important because such nodes fail to
maximize the number of internal nodes in the tree, which in turn
results in less than the maximum number of leaves.)
This permits us to conclude that $|w|\leq 2^{n^2-n}$.
\qed
\end{proof}

Now we show that there is a lower bound that is exponential in the
alphabet size and in the number of the states.

\begin{proposition}
\label{prop:lss1}
For all integers $n \geq 3$ there exists a DFA, $M_n$, with $n + 1$ states over the alphabet $\Sigma = \Gamma \cup \Gamma^{-1}$, where $\Gamma = \{ 1, 2, \ldots, n-2\}$ and $\Gamma^{-1} = \{ 1^{-1}, 2^{-1}, \ldots, (n-2)^{-1}\}$, 
with the property that if $w \in L(M_n)$ and $r(w) = \epsilon$,
then $|w| \geq 2^{n-1}$. 
\end{proposition}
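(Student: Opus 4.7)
The plan is to exhibit, for each $n \geq 3$, an explicit DFA $M_n$ with $n+1$ states over the alphabet $\Gamma \cup \Gamma^{-1}$ with $|\Gamma| = n-2$, and to prove by induction on $n$ that every reducible word accepted by $M_n$ has length at least $2^{n-1}$. I would take $M_n$ to have states $q_0, q_1, \ldots, q_n$ arranged in a chain-like fashion, with start state $q_0$ and unique accepting state $q_n$. The transitions would be designed so that each letter $a_i$ of $\Gamma$ acts as a ``level-$i$'' generator: any path from $q_0$ to $q_n$ whose label is a reducible word must follow a nested, recursive pattern in which each level demands two copies of a smaller reducible sub-word before the outer pair $a_i \ldots a_i^{-1}$ (or $a_i^{-1} \ldots a_i$) can be completed.

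To certify non-vacuity and essentially tightness of the bound, I would construct witnesses $w_k$ by a doubling recurrence such as $w_1 = a_1 a_1^{-1}$ and $w_k = a_k\, w_{k-1}\, a_k^{-1}\, w_{k-1}$, which satisfies $|w_k| = 2|w_{k-1}| + 2$ and therefore grows like $2^{k+1}$. I would verify directly, by walking the path through $M_n$, that $w_{n-2}$ labels an accepting computation, exhibiting a reducible accepted word of length roughly $2^{n-1}$ and confirming the bound is almost tight.

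The heart of the argument is the lower bound. Given any reducible $w \in L(M_n)$, I would appeal to the parse-tree decomposition established inside the proof of Proposition~\ref{prop:lssu}: either $w = u v$ with $u,v$ both nonempty and reducible, or $w = a\, u\, a^{-1}$ with $u$ reducible. The transition function of $M_n$ should force, in either case, a pair of nontrivial reducible sub-words each of which is accepted by a sub-automaton that embeds isomorphically into an instance of $M_{n-1}$. Applying the inductive hypothesis to each sub-word then gives $|w| \geq 2 \cdot 2^{n-2} = 2^{n-1}$. The base case $n = 3$ would be handled by a direct case analysis on the four-state automaton over $\{a_1, a_1^{-1}\}$, verifying that both $a_1 a_1^{-1}$ and $a_1^{-1} a_1$ are rejected while some length-$4$ reducible word is accepted.

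The principal obstacle is choosing the transitions so that every accepted reducible word is genuinely forced into a \emph{two-copies} decomposition at each recursive level: a construction that merely requires a single nested sub-call would yield only a linear lower bound on $|w|$, not an exponential one. A secondary subtlety is arranging the boundary transitions at $q_0$ and $q_n$ (and the assignment of letters to levels $1, \ldots, n-2$) so that the sub-automaton arising after peeling one outer level really is an instance of $M_{n-1}$, making the induction go through cleanly.
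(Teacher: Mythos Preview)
Your proposal sketches a structural-induction strategy that is quite different from the paper's argument. The paper gives a completely explicit DFA: states $q_{-1},q_0,\ldots,q_{n-1}$ with start $q_0$, sole accepting state $q_1$, and transitions $\delta(q_0,1)=q_1$, $\delta(q_a,a)=q_0$ and $\delta(q_a,a^{-1})=q_{a+1}$ for $1\le a\le n-2$, and $\delta(q_{n-1},1^{-1})=q_0$. The lower bound is then a \emph{symbol-counting} argument, not a parse-tree induction: for $2\le a\le n-2$ the only transitions on $a$ or $a^{-1}$ leave $q_a$, and the only transition entering $q_a$ is on $(a-1)^{-1}$; hence $|w|_{(a-1)^{-1}}\ge |w|_a+|w|_{a^{-1}}$. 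Starting from the observation that every letter must occur and iterating this inequality gives $|w|_a,|w|_{a^{-1}}\ge 2^{n-2-a}$, and summing yields $|w|\ge 2^{n-1}$. No embedding of $M_{n-1}$ into $M_n$ is needed.

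Your plan, by contrast, has a genuine gap at precisely the point you identify as the ``principal obstacle'': you never specify the transitions, and the claim that every accepted reducible word splits into two sub-words each accepted by a sub-automaton isomorphic to $M_{n-1}$ is neither proved nor even made plausible. In the split $w=uv$ the factors label paths from $q_0$ to an intermediate state $p$ and from $p$ to the accept state, which are not \emph{a priori} instances of the $M_{n-1}$ acceptance problem; and with one state removed but two alphabet letters disappearing, an isomorphic embedding of $M_{n-1}$ is far from automatic. There is also an internal inconsistency: your witness recurrence $w_k=a_kw_{k-1}a_k^{-1}w_{k-1}$ gives $|w_{n-2}|=2^{n-1}-2$, so if this word is accepted by your $M_n$ then the bound $|w|\ge 2^{n-1}$ you intend to prove is already violated. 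The paper's witness uses a slightly different recurrence (ending in an extra $1^{-1}1$) and has length exactly $2^{n-1}$, matching the lower bound.
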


\begin{proof}
The proof is constructive. 
Let $M_n$ be the DFA $(Q, \Sigma, \delta, q_0, F)$, illustrated in 
Figure~\ref{fig:lss1},
where $Q = \{q_{-1}, q_0, q_1,\ldots, q_{n-1}\}$, $F = \{q_1\}$,
and
\begin{equation}
\label{eq:deltalss1}
\delta(q_a, c) =
\begin{cases}
q_1, & \text{if }c = 1 \text{ and } a = 0; \\
q_{a+1}, & \text{if }c = a^{-1} \text{ and } 1 \leq a \leq n - 2;\\
q_0, &
\begin{array}[t]{rl}
\text{if either} &c = a \text{ and } 1 \leq a \leq n - 2,\\
		\text{or} &c = 1^{-1} \text{ and } a = n-1.
\end{array}
\end{cases} \nonumber
\end{equation}
Any other transitions lead to the dead state $q_{-1}$.

\begin{figure}[H]
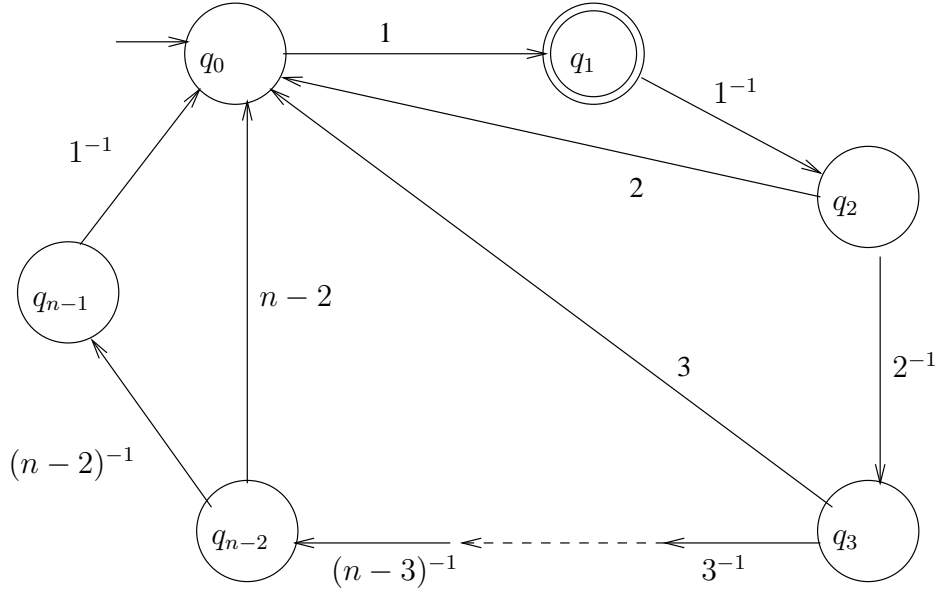

\begin{center}
\input reduce.pstex_t
\end{center}

\caption{
	$M_n$: an $n+1$ state DFA with the property that for all $w \in L(M_n)$ such that $r(w) = \epsilon$, $|w| \geq 2^{n-1}$.   The dead state is not
	shown.
}
\label{fig:lss1}
\end{figure}

Now we show that $M_n$ has the desired property. Assume there exists $w
\in L(M_n)$ such that $r(w) = \epsilon$. Then $|w|_a = |w|_{a^{-1}}$
for all $a \in \Sigma$. Since all words in $L(M_n)$ must contain the
symbol $1$ (due to the single incoming transition to the only accepting
state), it follows that $w$ must also contain $1^{-1}$.
Furthermore, the only possible transition from $q_1$ not leading to
the dead state uses the symbol $1^{-1}$.
Hence $w$
must begin with the prefix $11^{-1}$. Since $\delta(q_0, 11^{-1}) =
q_2$ and the only two transitions that leave $q_2$ are on $2$ and
$2^{-1}$, $w$ must contain both of $2$ and $2^{-1}$. Now assume that
$w$ contains the symbol $i^{-1}$ with $1 < i < n-2$. Then the state
$q_{i+1}$ must be reached while reading $w$, thus implying that the 
symbols $(i+1)$ and $(i+1)^{-1}$ also appear in $w$. 
Therefore, by induction, $w$ must contain at
least one occurrence of each $a\in \Sigma$.

Now we claim that $w$ must contain at least $2^{n-2-a}$ occurrences of
the symbol $a$, for $1 \leq a \leq n-2$, and hence at least $2^{n-2-a}$
occurrences of the symbol $a^{-1}$. 

We prove the claim by induction on $n-2-a$.
The basis, $a= n-2$, follows from the previous argument.  
Now, suppose the claim true for $k=n-2-a$. We prove it for
$k+1=n-2-(a-1)$.
By the induction hypothesis, $w$ contains at least $2^k$ occurrences
of the symbol $a$ and at least $2^k$ occurences of $a^{-1}$.
Observing the structure of the automaton, we conclude that to
have such a number of occurrences of the two letters $a$ and
$a^{-1}$, the state $q_k$ must be visited at least $2^{k+1}$ times.
On the other hand, the only transition entering $q_k$ is from the
state $q_{k-1}$ on the letter $(a-1)^{-1}$. Hence, $w$ must contain
at least $2^{k+1}=2^{n-2-(a-1)}$ occurrences of $(a-1)^{-1}$ and,
according to the initial discussion, at least 
$2^{k+1}=2^{n-2-(a-1)}$ occurrences of $a-1$. This proves the claim.

By computing the sum over all alphabet symbols, we get that
$|w| \geq 2 (2^{n-2}-1)$. However, since
the symbol $(n-2)^{-1}$ must always be followed by the symbol $1^{-1}$,
$w$ must actually contain one additional occurrence of each of $1$ and
$1^{-1}$. Thus $|w| \geq 2^{n-1}$.
\qed
\end{proof}

It turns out that the shortest word that reduces to $\epsilon$ accepted
by the DFA $M_n$ in the proof of Proposition \ref{prop:lss1} is related
to the well-known ruler sequence, $(\nu_2 (n))_{n \geq 1}$, where
$\nu_2 (n) $ denotes the exponent of the highest power of $2$ dividing
$n$.
This sequence has many
interesting characterizations including being the lexicographically
least infinite squarefree word over $\Zee$.
For integers $k > 0$, let $r_k = (\nu_2(n)_{1 \leq n <
2^k})$ be the prefix of length $2^k - 1$ of the ruler sequence.  For example,
$r_3 = 0102010$.

Let $w_n$ be the shortest word accepted by $M_n$
that reduces to $\epsilon$.
Let $w^\prime_n$ be the prefix of $w$ of length $|w|-2$.
This is well defined for $n\geq 3$. Also let $w_n = w^\prime_n =
\epsilon$ for $n=2$. Then for any integer $n \geq 3$, we have $w_n =
w^\prime_{n-1}(n-2)w^\prime_{n-1}(n-2)^{-1}1^{-1}1$. It can be easily verified
that this word is accepted by $M_n$.
Now define
the homomorphism $h$ such that $h(a) = a$ for $a\in \Gamma$, and $h(a)
= \epsilon$ for $a\in\Gamma^{-1}$. Then $h(w_n) = r_{n-2}1$.

The next proposition shows that over a fixed alphabet size we can still get an exponential lower bound.

\begin{proposition}
\label{prop:lss2}
For each integer $n \geq 1$ there exists a DFA, $M_n$ with $3n + 1$ states over the alphabet $\Sigma = \Gamma \cup \Gamma^{-1}$, where $\Gamma = \{ 1, 2\}$ and $\Gamma^{-1} = \{ 1^{-1}, 2^{-1}\}$, with the property that the only word $w \in L(M_n)$ such that $r(w) = \epsilon$ has length $|w| = 3\cdot 2^{n} - 4$. 
\end{proposition}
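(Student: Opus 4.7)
The plan is to explicitly construct a family of words $w_n$ and DFAs $M_n$ in parallel. Define $w_1 = 1\cdot 1^{-1}$ and, for $n \ge 2$,
\[
w_n = 1 \cdot w_{n-1} \cdot 2 \cdot w_{n-1}^{-1} \cdot 2^{-1} \cdot 1^{-1}.
\]
The recurrence $|w_n| = 2|w_{n-1}| + 4$ with $|w_1| = 2$ solves to $|w_n| = 3\cdot 2^n - 4$. Using the easy identity $r(w^{-1}) = r(w)^{-1}$ for every $w \in \Sigma^*$, induction on $n$ gives $r(w_n) = \epsilon$: the two inner occurrences of $w_{n-1}$ and $w_{n-1}^{-1}$ collapse to $\epsilon$, leaving $1\cdot 2 \cdot 2^{-1} \cdot 1^{-1}$, which itself reduces to $\epsilon$.

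For the DFA I would introduce three fresh states $a_i, b_i, c_i$ at each level $i = 1,\ldots,n$, together with one common accept state, for a total of exactly $3n+1$ states. Transitions are laid out so that the intended accepting computation on $w_n$ enters level $n$ at $a_n$ upon reading the first $1$, descends into the level-$(n-1)$ substructure to read $w_{n-1}$, returns via a transition on $2$ into $b_n$, re-enters the same substructure (now approached from its exit endpoint) to read $w_{n-1}^{-1}$, returns on $2^{-1}$ to $c_n$, and finally reaches the accept state upon reading $1^{-1}$. All remaining transitions are routed within the existing state set so that every deviation from this trajectory produces a word whose per-letter counts are unbalanced and hence cannot reduce to $\epsilon$.

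Correctness is by induction on $n$. The base case $n=1$ is immediate, since $M_1$ accepts only $w_1 = 1\cdot 1^{-1}$. For the inductive step, any $w\in L(M_n)$ with $r(w)=\epsilon$ is forced by the structure of the outermost level to decompose as $w = 1\cdot u \cdot 2 \cdot v \cdot 2^{-1}\cdot 1^{-1}$, where $u$ and $v$ are traversals entirely inside the level-$(n-1)$ substructure. Reducibility of $w$, together with the per-letter balance imposed by the DFA, forces $r(u) = r(v) = \epsilon$; applying the induction hypothesis to the substructure viewed from its entry (for $u$) and from its exit (for $v$) then pins down $u = w_{n-1}$ and $v = w_{n-1}^{-1}$. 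The main obstacle is designing the level-$(n-1)$ substructure so that, when approached backwards from its exit on a reducible input, it admits exactly the single traversal $w_{n-1}^{-1}$; it is this reversibility-like property, combined with careful routing of off-path transitions through the existing states, that keeps the state count at $3n+1$ instead of doubling at every level.
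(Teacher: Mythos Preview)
Your proposal correctly computes the target length and verifies that your candidate $w_n$ is reducible, but it has a genuine gap: the DFA is never actually built, and the uniqueness argument rests on two unproved assertions.

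First, you identify the ``reversibility-like property'' as the main obstacle and then simply assume it can be arranged. This is the crux. Your word $w_n$ contains both $w_{n-1}$ and $w_{n-1}^{-1}$, so the level-$(n-1)$ substructure must admit exactly one reducible traversal in each direction, with no extra states. You give no mechanism for this; ``careful routing of off-path transitions'' is not a construction. The paper sidesteps the issue entirely by choosing a different recursion, $w_k = a\,w_{k-1}\,a^{-1}\,a^{-1}\,w_{k-1}\,a$ with $a\in\{1,2\}$ alternating with the parity of $k$, so the inner block is read \emph{twice in the same direction} and no reversibility is needed. Its explicit DFA uses three tracks of states $q_i,r_i,p_i$ (plus $q_0$ and a dead state): the only reducible path from $q_{k-1}$ to $p_{k-1}$ is $w_{k-1}$, and the level-$k$ shell sends you from $q_k$ to $q_{k-1}$ on $a$, from $p_{k-1}$ through $r_k$ back to $q_{k-1}$ on $a^{-1}a^{-1}$, and from $p_{k-1}$ to $p_k$ on $a$, forcing the double traversal.

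Second, your inductive step asserts that any reducible $w\in L(M_n)$ must decompose as $1\,u\,2\,v\,2^{-1}\,1^{-1}$ with $r(u)=r(v)=\epsilon$. Neither claim is immediate: an accepting run could in principle loop through the outer level several times, and even granting that frame, reducibility of $w$ does not by itself force $r(u)$ and $r(v)$ to vanish separately (cancellation could cross the middle $2$). You appeal to ``per-letter balance imposed by the DFA'' without saying which balance the DFA enforces or why. The paper avoids these difficulties by not inducting on $n$ at all: it classifies, by induction on the length $m$, the set $C_m$ of state pairs connected by some reducible word of length $m$, proving a closed form for $C_m$ that immediately yields both existence and uniqueness of the reducible accepted word.
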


\begin{proof}
The proof is constructive. 
Let $M_n$ be the DFA $(Q, \Sigma, \delta, q_n, F)$ where $Q = \{q_{-1}, q_0, q_1, p_1\} \cup\{p_i,q_i,r_i: 2\leq i \leq n\}$, $F = \lbrace p_n \rbrace$, and
\begin{equation}
\label{eq:deltalss2q}
\delta(q_a, c) = q_{a-1}, 
\begin{array}[t]{rl}
\text{  if }1 \leq a \leq n,\text{ and either}
						&c = 1\text{ and }a \equiv 1\!\!\!\!\pmod 2,\\
						\text{or}&c = 2\text{ and }a \equiv 0\!\!\!\! \pmod 2.
\end{array}
\nonumber
\end{equation}

\begin{equation}
\label{eq:deltalss2p}
\delta(p_a, c) =
\begin{cases}
p_{a+1}, 
&\begin{array}[t]{rl}
\text{if }1 \leq a \leq n-1,\text{ and either}
						&c = 1\text{ and }a \equiv 0\!\!\!\! \pmod 2,\\
						\text{or}&c = 2\text{ and }a \equiv 1\!\!\!\! \pmod 2,
\end{array}\\
r_{a+1}, 
&\begin{array}[t]{rl}
\text{if }1 \leq a \leq n-1,\text{ and either}
						&c = 1^{-1}\text{ and }a \equiv 0\!\!\!\! \pmod 2,\\
						\text{or}&c = 2^{-1}\text{ and }a \equiv 1\!\!\!\! \pmod 2.
\end{array}
\end{cases}
\nonumber
\end{equation}

\begin{equation}
\label{eq:deltalss2r}
\delta(r_a, c) = q_{a-1},
\begin{array}[t]{rl}
\text{  if }2 \leq a \leq n,\text{ and either}
						&c = 1^{-1}\text{ and }a \equiv 1\!\!\!\!\pmod 2,\\
						\text{or}&c=2^{-1}\text{ and }a \equiv 0\!\!\!\! \pmod 2.
\end{array}
\nonumber
\end{equation}

\begin{equation}
\label{eq:deltalss2q0}
\delta(q_0,a^{-1})=p_1
\nonumber
\end{equation}

Any other transitions lead to the dead state $q_{-1}$.
The case $n=4$ is illustrated in Figure \ref{fig:lss2}.

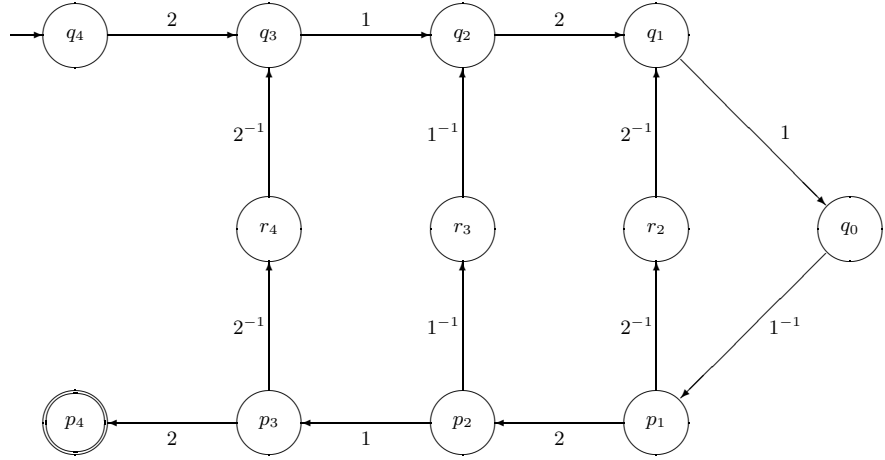
\begin{figure}[H]
\begin{center}

\setlength{\unitlength}{0.5cm}
 \resizebox{\columnwidth}{!}
{
  \begin{picture}(28,16)(0,0)

      \put(3,2){\oval(2,2)}
      \put(3,2){\makebox(0,0){$p_4$}}
      \put(9,2){\oval(2,2)}
      \put(9,2){\makebox(0,0){$p_3$}}
      \put(15,2){\oval(2,2)}
      \put(15,2){\makebox(0,0){$p_2$}}
      \put(21,2){\oval(2,2)}
      \put(21,2){\makebox(0,0){$p_1$}}

      \put(3,14){\oval(2,2)}
      \put(3,14){\makebox(0,0){$q_4$}}
      \put(9,14){\oval(2,2)}
      \put(9,14){\makebox(0,0){$q_3$}}
      \put(15,14){\oval(2,2)}
      \put(15,14){\makebox(0,0){$q_2$}}
      \put(21,14){\oval(2,2)}
      \put(21,14){\makebox(0,0){$q_1$}}

      \put(9,8){\oval(2,2)}
      \put(9,8){\makebox(0,0){$r_4$}}
      \put(15,8){\oval(2,2)}
      \put(15,8){\makebox(0,0){$r_3$}}
      \put(21,8){\oval(2,2)}
      \put(21,8){\makebox(0,0){$r_2$}}
      \put(27,8){\oval(2,2)}
      \put(27,8){\makebox(0,0){$q_0$}}
      
      \put(1,14){\vector(1,0){1}}
      \put(3,2){\oval(1.8,1.8)}
      
      \put(8,2){\vector(-1,0){4}}
      \put(6,1.5){\makebox(0,0){$2$}}
      \put(14,2){\vector(-1,0){4}}
      \put(12,1.5){\makebox(0,0){$1$}}
      \put(20,2){\vector(-1,0){4}}
      \put(18,1.5){\makebox(0,0){$2$}}
      
      \put(4,14){\vector(1,0){4}}
      \put(6,14.5){\makebox(0,0){$2$}}
      \put(10,14){\vector(1,0){4}}
      \put(12,14.5){\makebox(0,0){$1$}}
      \put(16,14){\vector(1,0){4}}
      \put(18,14.5){\makebox(0,0){$2$}}
      
      \put(9,3){\vector(0,1){4}}
      \put(8.4,5){\makebox(0,0){$2^{-1}$}}
      \put(9,9){\vector(0,1){4}}
      \put(8.4,11){\makebox(0,0){$2^{-1}$}}
      \put(15,3){\vector(0,1){4}}
      \put(14.4,5){\makebox(0,0){$1^{-1}$}}
      \put(15,9){\vector(0,1){4}}
      \put(14.4,11){\makebox(0,0){$1^{-1}$}}
      \put(21,3){\vector(0,1){4}}
      \put(20.4,5){\makebox(0,0){$2^{-1}$}}
      \put(21,9){\vector(0,1){4}}
      \put(20.4,11){\makebox(0,0){$2^{-1}$}}
      
      \put(21.75,13.25){\vector(1,-1){4.5}}
      \put(25,11){\makebox(0,0){$1$}}
      \put(26.25,7.25){\vector(-1,-1){4.5}}
      \put(25,5){\makebox(0,0){$1^{-1}$}}

   \end{picture}
}
   \caption{$M_4$: a $3\cdot 4+1$ state DFA with the property that 
   the only reducible word accepted by it has length $3 \cdot 2^{4} - 4$.
   (The dead state is not represented.)}
   \label{fig:lss2}

\end{center}
\end{figure}

In order to prove the statement, for each integer $m\geq 0$, let us consider
the set $C_m$ of pairs of states, others than the dead state,
which are connected by a reducible word of length $m$, i.e.
\[
C_m=\{(s',s'')\in Q'\times Q':\exists w\in\Sigma^*\text{ s.t.\ }
|w|=m, r(w)=\epsilon,\text{ and }\delta(s',w)=s''\},
\]
where $Q'=Q\setminus\{q_{-1}\}$.
We notice that $C_m=\emptyset$, for $m$ odd.
Furthermore $C_0=\{(s,s):s\in Q'\}$ and, for $m>0$,
$C_m=C'_m\cup C''_m$, where:
\begin{eqnarray}
C'_m = \{(s',s'')&:&\exists(r',r'')\in C_{m-2}, a\in\Sigma\nonumber\\
&&\text{ s.t.\ }\delta(s',a)=r'\text{ and }\delta(r'',a^{-1})=s''\},
\label{eq:cprime}
\end{eqnarray}
\begin{eqnarray}
C''_m = \{(s',s'')&:&\exists m',m''>0, (s',r')\in C_{m'},
	(r'',s'')\in C_{m''}\nonumber\\
	&&\text{ s.t.\ }m'+m''=m\text{ and }r'=r''\}.\label{eq:csecond}
\end{eqnarray}
We claim that, for each $m\geq 1$:
\begin{equation}
C_m=\left\{
\begin{array}{l@{~~}l}
\{(q_k,p_k)\}, 			& \mbox{if $\exists k$, $1\leq k\leq n$, s.t.\ $m=3\cdot 2^k-4$;}\\
\{(q_k,r_k),(r_k,p_k)\},	& \mbox{if $\exists k$, $2\leq k\leq n$, s.t.\ $m=3\cdot 2^{k-1}-2$;}\\
\emptyset,				& \mbox{otherwise.}
\end{array}
\right.\label{eq:cm}
\end{equation}
We prove (\ref{eq:cm}) by induction on $m$.

As already noticed, $C_1=\emptyset$. By inspecting the transition
function of $M_n$, we can observe that
$C_2=\{(q_1,p_1)\}$. Notice that $2=3\cdot 2^1-4$. This proves the
basis.

For the inductive step, we now consider $m>2$ and we suppose
(\ref{eq:cm}) true for integers less than $m$.

First, we show that we can simplify the formula (\ref{eq:csecond})
for $C''_m$.
In fact, using the inductive hypothesis, 
for $0<m',m''<m$, the only possible $(s',r')\in C_{m'}$
and $(r'',s'')\in C_{m''}$ satisfying $r'=r''$ are the pairs 
$(q_j,r_j),(r_j,p_j)\in C_{3\cdot 2^{j-1}-2}$, obtained by taking $m'=m''=3\cdot 2^{j-1}-2$,
for suitable values of $j$. This, together with the condition $m'+m''=m$, restricts the set $C''_m$ to:
\begin{equation}
C''_m = \{(s',s'')\mid\exists r\in Q: (s',r)\in C_{m/2}\text{ and }(r,s'')\in C_{m/2}\}.
\label{eq:csecshort}
\end{equation}

Now we consider three subcases:

\emph{Case~1: $m=3\cdot 2^k-4$}, with $k\geq 2$.\\
An easy verification shows that $m-2$ cannot be expressed
in the form $3\cdot 2^j-4$ or in the form $3\cdot 2^j-2$, for any $j$.
Hence, by the inductive hypothesis, $C_{m-2}=\emptyset$.
By (\ref{eq:cprime}), this implies $C'_m=\emptyset$, and then $C_m=C''_m$.

We now compute $C''_m$ using (\ref{eq:csecshort}) and the set
$C_{m/2}$ obtained according to the inductive hypothesis.
We observe that $m/2=3\cdot 2^{k-1}-2$. 
Hence, for $k\leq n$, $C_{m/2}=\{(q_k,r_k),(r_k,p_k)\}$ and, thus,
$C_m=C''_m=\{(q_k,p_k)\}$. On the other hand, if $k>n$ then 
$C_{m/2}=\emptyset$, which implies $C_m=C''_m=\emptyset$.

\smallskip
\emph{Case~2: $m=3\cdot 2^{k-1}-2$}, with $k\geq 2$.\\
First, we observe that $m/2$ cannot be written either as $3\cdot 2^j-4$ 
or as $3\cdot 2^j-2$. Hence, by the inductive hypothesis, the set $C''_m$ 
must be empty. Thus, $C_m=C'_m$.

We compute $C'_m$ as in (\ref{eq:cprime}), using the set $C_{m-2}$ obtained
according to the induction hypothesis.
We notice that $m-2=3\cdot 2^{k-1}-4$.
If $k>n+1$ then $C_{m-2}=\emptyset$, thus implying $C_m=C'_m=\emptyset$.
Otherwise, $C_{m-2}=\{(q_{k-1},p_{k-1})\}$. 
In order to obtain all the elements of $C'_m$, we have to examine
the transitions entering $q_{k-1}$ or leaving $p_{k-1}$.
For $k=n+1$ there are no such transitions and, hence, $C_m=C'_m=\emptyset$.
For $k\leq n$ all the transitions entering $q_{k-1}$ or leaving $p_{k-1}$ 
involve the same symbol $a\in\{1,2\}$ or its inverse: there are exactly 
two transitions entering $q_{k-1}$ ($\delta(q_k,a)=\delta(r_k,a^{-1})=q_{k-1}$) 
and exactly two transitions leaving $p_{k-1}$ ($\delta(p_{k-1},a)=p_k$ and
$\delta(p_{k-1},a^{-1})=r_k$). Hence, by the appropriate combinations 
of these transitions with the only pair $(q_{k-1},p_{k-1})$ in 
$C_{m-2}$, we get that $C_m=C'_m=\{(q_k,r_k),(r_k,p_k)\}$.

\smallskip
\emph{Case~3: Remaining values of $m$}.\\
If $m=3\cdot 2^{k-1}$ with $k\geq 2$, then $C_{m-2}=\{(q_k,r_k),(r_k,p_k)\}$.
All the transitions entering or leaving $r_k$ use the same symbol $a^{-1}$, with
$a\in\{1,2\}$, while all the transitions entering $q_k$ or leaving $p_k$ use
the other symbol $b\in\{1,2\}$, $b\neq a$, or $b^{-1}$. Hence, 
from (\ref{eq:cprime}), $C'_m=\emptyset$.

For all the other values of $m$, the form of $m-2$ is neither 
$3\cdot 2^j-4$ nor $3\cdot 2^j-2$. This implies that
$C_{m-2}=\emptyset$ and, then, $C'_m$ must be empty.
Hence, we conclude that $C_m=C''_m$.

Suppose $C''_m\neq\emptyset$. From
(\ref{eq:csecshort}) and the inductive hypothesis, it follows that 
$m/2=3\cdot 2^{j-1}-2$ for some $j$, thus implying $m=3\cdot 2^j-4$.
This is a contradiction, because the values of $m$ we are considering
are not of this form. Hence, $C_m=C''_m$ must be empty.

This completes the proof of (\ref{eq:cm}).

\medskip
Recall that the initial state of $M_n$ is $q_n$, while the only final
state is $p_n$. Hence, the length of the shortest reducible word
accepted by $M_n$ is the smallest integer $m$ such that
$(q_n,p_n)\in C_m$.
According to (\ref{eq:cm}), we conclude that such a length
is $3\cdot 2^n-4$. 

From (\ref{eq:cm}), it also follows that there are no reducible words
accepted by $M_n$ whose length is different than $3\cdot 2^n-4$.
With some small refinements in the argument used 
to prove (\ref{eq:cm}),
we can show that $M_n$ accepts exactly one reducible word.
In particular, for $k\geq 1$ we consider:
\[
w_k=\left\{
\begin{array}{l@{~~}l}
11^{-1}, 					& \mbox{if $k=1$;}\\
1w_{k-1}1^{-1}1^{-1}w_{k-1}1,	& \mbox{if $k>1$ and $k$ odd;}\\
2w_{k-1}2^{-1}2^{-1}w_{k-1}2,	& \mbox{otherwise.}
\end{array}
\right.
\]
By an inductive argument it can be proved that $w_k$ is the only
reducible word such that $\delta(q_k,w_k)=p_k$ and $|w_k|=3\cdot 2^k-4$.
\qed
\end{proof}

We now examine the special case where $\Sigma = \{1,1^{-1}\}$, and give a cubic upper bound and quadratic lower bound. The next proposition gives the upper bound, which
holds even in the nondeterministic case.

\begin{proposition}
\label{prop:lssonegenup}
Let $M  = (Q,\{1,1^{-1}\},\delta,q_0,F)$ be an NFA with $n$ states such
that $\epsilon\in r(L(M))$. Then $M$ accepts a reducible word of
length at most $n(2n^2+1)$.
\end{proposition}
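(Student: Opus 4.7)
The plan is to fix a shortest word $w \in L(M)$ with $r(w) = \epsilon$, say of length $L$ with accepting path $s_0, s_1, \ldots, s_L$ reading $w = a_1 a_2 \cdots a_L$, and to bound $L$ via two pigeonhole arguments: one on pairs $(s_i, c_i)$ of state and running ``height'' $c_i = |a_1 \cdots a_i|_1 - |a_1 \cdots a_i|_{1^{-1}}$, and one on pairs of states attached to the first and last visit to each height. A key simplifying observation is that over the unary-inverse alphabet $\{1, 1^{-1}\}$, any word with equal numbers of $1$'s and $1^{-1}$'s reduces to $\epsilon$, since such a word must contain either $11^{-1}$ or $1^{-1}1$ as a factor and we may remove it and recurse.

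First I would prove a no-shortcut lemma: all the pairs $(s_i, c_i)$ for $0 \le i \le L$ are distinct. Otherwise, if $(s_i, c_i) = (s_j, c_j)$ with $i < j$, then $a_1 \cdots a_i \, a_{j+1} \cdots a_L$ is read along a valid accepting path (splicing at the common state), has total height $0$ (since $c_j = c_i$), hence reduces to $\epsilon$, and is strictly shorter than $w$---a contradiction. This immediately gives
\[
L + 1 \leq n \cdot \bigl(c_{\max} - c_{\min} + 1\bigr),
\]
where $c_{\max}$ and $c_{\min}$ are the extremes of the sequence $(c_i)$.

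The main step is then to bound the range of heights by showing $c_{\max} \leq n^2$ (and symmetrically $-c_{\min} \leq n^2$). For each height $\ell \in \{0, 1, \ldots, c_{\max}\}$, let $\tau_\ell$ be the first index and $\sigma_\ell$ the last index with $c_i = \ell$; these are well-defined since heights change by $\pm 1$ per step, and they satisfy $\tau_0 < \tau_1 < \cdots < \tau_{c_{\max}} \leq \sigma_{c_{\max}} < \cdots < \sigma_0$. If $c_{\max} \geq n^2 + 1$, then the $c_{\max} + 1 \geq n^2 + 2$ pairs $(s_{\tau_\ell}, s_{\sigma_\ell}) \in Q \times Q$ must collide: $(s_{\tau_{\ell_1}}, s_{\sigma_{\ell_1}}) = (s_{\tau_{\ell_2}}, s_{\sigma_{\ell_2}})$ for some $\ell_1 < \ell_2$, with common components $u$ and $v$. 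Splicing out both the ascent cycle (from $u$ at height $\ell_1$ to $u$ at height $\ell_2$) and the descent cycle (from $v$ at height $\ell_2$ to $v$ at height $\ell_1$) produces
\[
w' = (a_1 \cdots a_{\tau_{\ell_1}})\,(a_{\tau_{\ell_2}+1} \cdots a_{\sigma_{\ell_2}})\,(a_{\sigma_{\ell_1}+1} \cdots a_L),
\]
which is still read along a valid accepting path (the three pieces join at $u$ and $v$), has total height $\ell_1 + 0 + (-\ell_1) = 0$ and hence reduces to $\epsilon$, but is strictly shorter than $w$---a contradiction.

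The main obstacle is picking the right pigeonhole: tracking only a single state per height would give collisions in $Q$ and merely a linear bound on $c_{\max}$, insufficient to match $n(2n^2 + 1)$. Simultaneously tracking the first and last visit at each height moves the pigeonhole into $Q \times Q$, and exploiting that the ascent and descent cycles carry opposite net heights lets us delete both at once without disturbing the total. Combining the two height bounds yields $c_{\max} - c_{\min} \leq 2n^2$, and therefore $L + 1 \leq n(2n^2 + 1)$, as required.
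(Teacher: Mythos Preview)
Your proof is correct and follows essentially the same approach as the paper's: both arguments bound the length via a pigeonhole on (state, balance) pairs and bound the balance range via a second pigeonhole on pairs of states attached to the first and last visit to each balance level. Your presentation is a bit more streamlined---you work directly with the prefix heights $c_i$ of the whole word and treat the positive and negative ranges separately, whereas the paper phrases the first step as a contradiction and routes the second pigeonhole through a shortest balanced factor $y$ of $w$---but the two proofs are the same in substance.
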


\begin{proof}
We prove the result by contradiction. Assume the shortest $w \in L(M)$
such that $r(w) = \epsilon$ has $|w| > n(2n^2+1)$. Define a function
$b$ on words by $b(z) = |z|_1 - |z|_{1^{-1}}$ for $z \in \Sigma^*$.
Roughly speaking, the function $b$ measures the ``balance'' between the
number of occurrences of the symbol $1$ and those of the symbol
$1^{-1}$ in a word. 

Suppose that no factor $w'$ of $w$ has $|b(w')| > n^2$. 
Then the function $b$ can take on at most $2n^2+1$
distinct values. Since $|w| > n(2n^2+1)$, there must be a value $C$
such that $b$ takes the value $C$ for more than $n$ different prefixes
of $w$.  That is, there is some $\ell\geq n$ such that $w = x y_1 y_2
\cdots y_{\ell} z$ where the $y_i$ are nonempty and
\[
b(x) = b(xy_1) = b(xy_1y_2) = \cdots = b(x y_1 y_2 \cdots y_{\ell}).
\]
Consider a sequence of $\ell+1$ states $p_0,p_1,\ldots,p_{\ell}\in Q$,
and a state $q_f\in F$, such that
during an accepting computation $M$ makes the following transitions:
\[p_0\in\delta(q_0,x), p_1\in\delta(p_0,y_1),
\ldots,
p_{\ell}\in\delta(p_{\ell-1},y_{\ell}), q_f\in\delta(p_{\ell},z)
\]
Since $\ell\geq n$, a state must repeat in the above sequence, say
$p_i=p_j$ with $i < j$.  Then $u = x y_1 \cdots y_i y_{j+1} \cdots y_{\ell} z$ is shorter
than $w$ (since we have omitted $y_{i+1} \cdots y_j$) and it is
accepted by $M$. Furthermore, observing that $b(y_{i+1} \cdots y_j)=0$, we
conclude that $r(u) = \epsilon$. Since $|u|<|w|$, this is a contradiction
to our choice of $w$. Hence, $w$ must contain 
a factor $w^\prime$ of $w$ such that $|b(w^\prime)|> n^2$.

Let $y$ be the shortest factor of $w$ such that $b(y) = 0$ and 
$|b(y^\prime)|> n^2$ for some prefix $y^\prime$ of $y$. 
We can write $w = xyz$, for suitable words $x,z$.
Let $D$ be the maximum value of $|b(y^\prime)|$ over all prefixes $y^\prime$
of $y$.  We suppose that $D>0$. (The argument can be easily adapted to the
case $D<0$.) For $i = 0,1,2,\ldots,D$, let $R(i)$ be the shortest prefix
of $y$ with $b(R(i)) = i$.  Similarly, let $S(i)$ be the
longest prefix of $y$ with $b(S(i)) = i$.  
Again, consider an accepting computation of $M$ on 
input $w$. For each pair
$[R(i),S(i)]$, let $[P(i),Q(i)]$ be the pair of states such that $M$ is
in state $P(i)$ after reading $xR(i)$ and $M$ is in state $Q(i)$ after
reading $xS(i)$.  Since $D > n^2$, some pair of states repeats in the
sequence $\{[P(i),Q(i)]\}$.  That is, there exists $j<k$ such that
$[P(j),Q(j)] = [P(k),Q(k)]$.  We may therefore omit the portion of the
computation that occurs between the end of $R(j)$ and the end of
$R(k)$ as well as that which occurs between the end of $S(k)$ and the
end of $S(j)$ to obtain a computation accepting a shorter word $u$ 
such that $r(u) = \epsilon$. Again we have a contradiction, and our result follows.
\qed
\end{proof}

The following proposition gives a quadratic lower bound.

\begin{proposition}
\label{prop:lssonegenlow}
For each integer $n \geq 0$ there exists a DFA $M_n$ with $n + 1$
states, over the alphabet $\Sigma = \{1,1^{-1}\}$, such that the
only reducible word $w \in L(M_n)$, has length
$(n+1)(n-1)/2 $ if $n$ is odd, and $n^2/2$
if $n$ is even.
\end{proposition}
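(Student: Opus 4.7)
The plan is to give an explicit construction of $M_n$ and then analyze it in the same style as the proof of Proposition~\ref{prop:lss2}. I would define $M_n$ by specifying its state set $Q=\{q_0,q_1,\ldots,q_n\}$, its initial state, a single accepting state, and a transition function designed so that any balanced accepting walk is forced to follow one specific zigzag pattern. A natural design is a linear chain whose forward edges mostly carry the symbol $1$ together with a carefully placed self-loop or back edge near the top that forces extra $1^{-1}$'s to be read in order to restore balance. The parity of $n$ affects where the accepting state sits relative to the top and where the back edges are attached, which is how the two cases of the length formula arise.

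Next I would identify the candidate reducible word $w_n$, most naturally via a recursion of the form $w_n=w_{n-1}\cdot u_n$, where $u_n$ is a block of length $2\lfloor n/2\rfloor$, and verify that $|w_n|=(n+1)(n-1)/2$ for odd $n$ and $|w_n|=n^2/2$ for even $n$. This follows by induction using the identity $\lfloor n^2/2\rfloor=\lfloor (n-1)^2/2\rfloor + 2\lfloor n/2\rfloor$. That $w_n$ is reducible and is accepted by $M_n$ can be checked by a direct trace through the DFA: since $w_n$ is built by the recursion, the trace is built by gluing the trace for $w_{n-1}$ to the trace for $u_n$.

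For uniqueness (that no shorter reducible word is accepted and that no other reducible word of any length is accepted) I would mirror the combinatorial argument of Proposition~\ref{prop:lss2}: define $C_m=\{(s,t)\mid \exists w\in\Sigma^*\text{ with }r(w)=\epsilon,\ |w|=m,\text{ and }t\in\delta(s,w)\}$, and analyze how $C_m$ evolves via the two ways a reducible word of length $m$ can arise, namely flanking a shorter reducible word with a pair $aa^{-1}$ or concatenating two shorter reducible words. A careful induction on $m$ then shows that the pair (initial state, accepting state) enters $C_m$ only at $m=\lfloor n^2/2\rfloor$ and only through a single reducible word.

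The principal obstacle is pinning down the transition structure so that the inductive analysis goes through cleanly, simultaneously ruling out spurious shorter reducible accepting walks (the lower-bound side) and preventing multiple reducible words of the target length (the uniqueness side). Treating the odd and even cases in parallel, while keeping the state count at exactly $n+1$, is the main place where the construction has to be tuned.
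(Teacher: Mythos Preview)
Your plan is workable in principle, but it is considerably heavier than what the problem needs, and it leaves the only substantive step---the actual construction of $M_n$---unspecified.

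The paper exploits a simplification that you mention in passing (``balanced accepting walk'') but do not use to full effect: over $\{1,1^{-1}\}$, a word is reducible if and only if its balance $b(w)=|w|_1-|w|_{1^{-1}}$ is zero. Hence, if one designs $M_n$ so that $L(M_n)$ is a single one-parameter family of words, the entire uniqueness and length argument collapses to solving one linear equation. Concretely, the paper takes $M_n$ to be a single directed cycle on the non-dead states, the first $\lfloor n/2\rfloor$ edges labeled $1$ and the remaining edges labeled $1^{-1}$, with initial state $q_0$ and accepting state $q_{\lfloor n/2\rfloor}$. For odd $n$ this yields
\[
L(M_n)=\bigl\{\,\bigl(1^{(n-1)/2}(1^{-1})^{(n+1)/2}\bigr)^{\alpha}\,1^{(n-1)/2}:\alpha\ge 0\,\bigr\},
\]
and $b(w)=\tfrac12(n-1-2\alpha)$ vanishes for exactly one value $\alpha=(n-1)/2$; reading off the length gives $(n+1)(n-1)/2$. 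The even case is analogous.

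By contrast, you import the $C_m$ recursion of Proposition~\ref{prop:lss2}. That machinery was needed there because over a four-letter alphabet reducibility is not a numerical condition and because that automaton has genuine branching. Here neither obstacle is present, so the $C_m$ induction is overkill. Your recursion $w_n=w_{n-1}\cdot u_n$ also presupposes a uniform family in which $M_{n-1}$ embeds in $M_n$; that is an additional design constraint you would have to engineer and justify, whereas the paper treats each $M_n$ in isolation with a one-line balance computation.

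Finally, the phrase ``carefully placed self-loop or back edge near the top'' is exactly where the whole proof lives; until the transitions are written down, one cannot check the length formula, the state count, or uniqueness. Once you notice that a simple cycle (or lollipop, for even $n$) already suffices, the rest of your plan becomes unnecessary.
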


\begin{proof}
The proof is constructive. Let $M_n$ be the DFA $(Q, \Sigma, \delta, q_0, F)$, 
illustrated in Figure \ref{fig:lssonegenlow}, 
where $Q = \{q_{-1}\}\cup\{q_i: 0\leq i <n\}$, 
$F = \{q_{\lfloor \frac{n}{2} \rfloor}\}$, and
\begin{equation}
\label{eq:deltalssonegenlow}
\delta(q_a, c) =
\begin{cases}
q_{a+1}, & \text{if either }c = 1 \text{ and } 0 \leq a < \lfloor \frac{n}{2} \rfloor,\\
~ & \text{or }c = 1^{-1} \text{ and } \lfloor \frac{n}{2} \rfloor \leq a \leq n - 2;\\
q_{a \bmod 2}, & \text{if }c = 1^{-1} \text{ and } a = n-1.\\
\end{cases} \nonumber
\end{equation}
Any other transitions lead to the dead state, $q_{-1}$. 

Observe that if $n$ is odd, then each word $w$ accepted by $M_n$ 
has the form $w = (1^{\frac{n-1}{2}}1^{-\frac{n+1}{2}})^{\alpha}1^{\frac{n-1}{2}}$,
for an $\alpha\geq 0$. Computing the ``balance'' function $b$ introduced
in the proof of Proposition~\ref{prop:lssonegenup}, we get
$b(w)=\frac12(n-1-2\alpha)$, which is $0$ if and only if $\alpha = \frac{n-1}2$.
Finally, by computing the length of $w$ for such an $\alpha$, we obtain
$(n+1)(n-1)/2$.

Similarly, in the case of $n$ even, we can prove that the only reducible word
accepted by $M_n$ has length $n^2/2$.
\qed
\end{proof}

\begin{figure}[H]
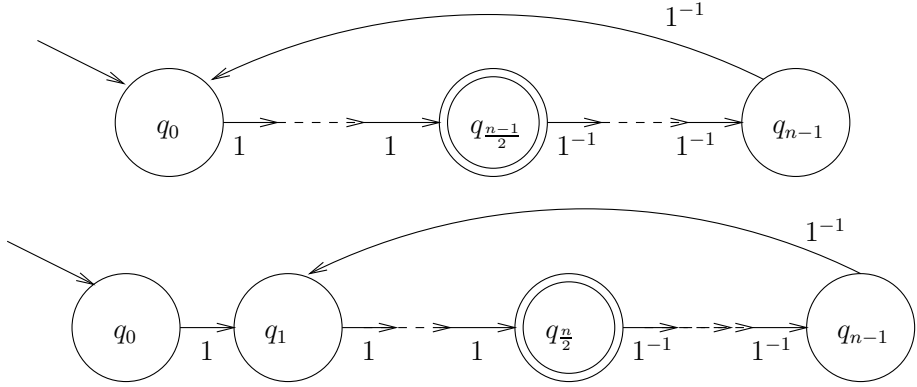

\begin{center}
\resizebox{\columnwidth}{!}{\input cycle.pstex_t}
\end{center}

\caption{
	Top: $M_n$ where $n$ is odd. Bottom: $M_n$ where $n$ is even.
}
\label{fig:lssonegenlow}
\end{figure}

\section{Counterexamples}

Here we look at counterexamples that solve two natural questions. The
first question is whether the set of all equivalent words to those in a
regular language must be regular. The second question is whether any of
our results hold for a more generalized form of the reduced
representation. We now define the set of equivalent words.

\begin{definition}
For $w\in \Sigma^*$, the set of equivalent words is $${\rm eq}(w) = \{w^\prime: r(w) = r(w^\prime)\}$$ For $L\subseteq \Sigma^*$, the set of equivalent words is 
$${\rm eq}(L) = \{{\rm eq}(w): w \in L\}$$
\end{definition}

\begin{proposition}
\label{prop:equivalent}
There exists $L \subseteq \Sigma^*$ such that $L$ is regular but ${\rm eq}(L)$ is not.
\end{proposition}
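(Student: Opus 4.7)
The plan is to read $\mathrm{eq}(L)$ as $\bigcup_{w\in L}\mathrm{eq}(w)=\{w'\in\Sigma^*:r(w')\in r(L)\}$ (the union of the equivalence classes indexed by $L$, matching the prose ``the set of all words which are equivalent to the words in $L$''), and then to exhibit a trivial regular $L$ whose $\mathrm{eq}$-image encodes an unbounded counting condition.

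First I would take the one-generator case: set $\Gamma=\{1\}$, so $\Sigma=\{1,1^{-1}\}$, and let $L=\{\epsilon\}$, which is certainly regular. Since the free group on one generator is $\mathbb{Z}$, for any $w\in\Sigma^*$ we have $r(w)=\epsilon$ if and only if $|w|_1=|w|_{1^{-1}}$. Consequently
\[
\mathrm{eq}(L)=\{w'\in\Sigma^*:r(w')=\epsilon\}=\{w\in\{1,1^{-1}\}^*:|w|_1=|w|_{1^{-1}}\}.
\]

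Next I would rule out regularity of $\mathrm{eq}(L)$ by the pumping lemma. Assume for contradiction that $\mathrm{eq}(L)$ is regular with pumping constant $p$, and pick $w=1^p(1^{-1})^p\in\mathrm{eq}(L)$. Any decomposition $w=xyz$ with $|xy|\le p$ and $|y|\ge 1$ forces $y$ to be a nonempty block of $1$'s, so $xy^2z$ has strictly more $1$'s than $1^{-1}$'s and is therefore outside $\mathrm{eq}(L)$, a contradiction.

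There is essentially no real obstacle: the only substantive choice is interpreting the notation $\mathrm{eq}(L)$, after which the one-generator case collapses the problem to the standard non-regular ``equal counts'' language. The key insight is that even the trivial input $L=\{\epsilon\}$ is enough, because inverting $r$ immediately introduces a $\mathbb{Z}$-valued balance condition that no finite automaton can track.
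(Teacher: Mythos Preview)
Your proof is correct and uses exactly the same example as the paper: $\Sigma=\{1,1^{-1}\}$ and $L=\{\epsilon\}$. The paper simply states that $\mathrm{eq}(L)$ is ``the language of balanced parentheses, which is well known to be nonregular,'' whereas you spell out the equal-counts characterization and a pumping-lemma argument; your description is in fact more precise (since both $11^{-1}$ and $1^{-1}1$ cancel here, the set is all words with $|w|_1=|w|_{1^{-1}}$ rather than the one-sided Dyck language), but the approach is the same.
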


\begin{proof}
Consider the following example. Let $\Sigma = \{1,1^{-1}\}$ and let $L = \{\epsilon\}$. Then ${\rm eq}(L)$ is the language of balanced parentheses, which is 
well known to be nonregular.
\qed
\end{proof}

\begin{proposition}
\label{prop:equiv_cf}
Let $L$ be a regular language over an alphabet $\Sigma$.  The language
${\rm eq}(L)$ is context-free.
\end{proposition}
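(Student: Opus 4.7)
The plan is to observe that ${\rm eq}(L) = r^{-1}(r(L))$: indeed, $w' \in {\rm eq}(L)$ exactly when $r(w') = r(w)$ for some $w \in L$, which is the same as saying $r(w') \in r(L)$. Since Corollary~\ref{cor:regclosure} gives that $r(L)$ is regular, it suffices to show that $r^{-1}(R)$ is context-free for every regular $R \subseteq \Sigma^*$.

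For that, I would build a PDA $P$ that simulates the on-the-fly computation of the reduced prefix on its stack while simultaneously running a DFA $A = (Q_A,\Sigma,\delta_A,q_{0,A},F_A)$ for $R$ on that reduced prefix. The stack alphabet is $\{(a,q) : a \in \Sigma,\ q \in Q_A\}$ together with a bottom marker, and the finite control has state set $Q_A$ with initial state $q_{0,A}$. The invariant to maintain, after reading any input prefix $u$, is that reading bottom-to-top the stack spells $r(u)$, and the control state equals $\delta_A(q_{0,A},r(u))$ (using the extended transition function).

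The transition rule directly implements one step of free reduction. In control state $q$, on input letter $a$: if the top of stack is of the form $(a^{-1},q')$, pop it and switch to control state $q'$; otherwise push $(a,q)$ and switch to $\delta_A(q,a)$. Acceptance is by final state once the input is consumed. The crucial design choice is that each stack symbol carries the control state that was in force \emph{before} the push; this is exactly what allows a pop to correctly rewind the DFA simulation when a cancellation occurs.

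The verification proceeds by a routine induction on $|u|$, with a two-way case split on whether reading the next input letter cancels the top symbol or extends the reduced prefix. Once the invariant is established, $P$ accepts $w$ iff $r(w) \in R$, so $L(P) = r^{-1}(R) = {\rm eq}(L)$, as desired. The only delicate point, and the main thing worth checking carefully, is that the stored pre-push states really do recover the correct DFA state upon a cancellation: when a matching pair is collapsed, the reduced prefix (and hence the DFA state) must revert to exactly what it was just before the earlier symbol was pushed, which is precisely the information saved in the pushed pair.
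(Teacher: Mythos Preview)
Your proof is correct and follows the same high-level strategy as the paper: observe that ${\rm eq}(L)=r^{-1}(r(L))$, invoke the regularity of $r(L)$, and build a PDA that computes $r(w)$ on its stack by left-to-right cancellation while testing membership of $r(w)$ in the regular language $r(L)$.

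The one genuine difference is in how the membership test is realized. The paper's PDA operates in two phases: first it reads all of $w$ and leaves $r(w)$ on the stack, then on $\epsilon$-moves it pops the stack and feeds the popped symbols to an automaton for the \emph{reversal} of $r(L)$; the reversal trick compensates for the LIFO order in which $r(w)$ comes off the stack. Your PDA instead runs the DFA for $r(L)$ \emph{online}, threading the current DFA state through the finite control and stashing the pre-push state inside each stack symbol so that a cancellation can rewind the simulation. Both are standard devices; yours avoids the reversal and the two-phase structure at the cost of a larger stack alphabet, while the paper's is more modular (the reduction and the recognition are cleanly separated). Either way the argument goes through, and your invariant is exactly the right thing to verify.
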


\begin{proof}
Let $M$ be a DFA accepting $L$.  We first construct the $\epsilon$-NFA
$M_r$ of Proposition~\ref{prop:regalg} that accepts $r(L)$.  We then reverse the
transitions of $M_r$ to obtain an $\epsilon$-NFA $A$ that accepts the
reversal of $r(L)$.  We now construct a PDA $B$ that accepts ${\rm
eq}(L)$.  The operation of $B$ is as follows.  On input $w$, the PDA
$B$ reads each symbol of $w$ and compares it with the symbol on top of
the stack.  If the symbol being read is $a$ and the symbol on top of
the stack is the inverse of $a$, the machine $B$ pops the top symbol
of the stack.  Otherwise, the machine $B$ pushes $a$ on top of the
stack.  After the input $w$ is consumed, the stack contains a word $z$
that is equivalent to $w$.  Moreover, since $z$ does not contain any
factor of the form $aa^{-1}$, the word $z$ must equal $r(w)$ by
Lemma~\ref{lem:red}.  Finally, on $\epsilon$-transitions, the PDA $B$
pops each symbol of $z$ off the stack and simulates the computation of
$A$ on each popped symbol.  The net effect is to simulate $A$ on
$z^R$ (the reversal of $z$).  If $z^R$ is accepted by $A$, the PDA $B$
accepts $w$.  Since $z^R$ is accepted by $A$ if and only if $z \in
r(L)$, the PDA $B$ accepts $w$ if and only if $r(w) \in r(L)$.
However, we have $r(w) \in r(L)$ if and only if $w \in {\rm eq}(L)$,
so $B$ accepts ${\rm eq}(L)$, as required.
\qed
\end{proof}

A set of equivalent words as described above can be thought of as an equivalence class under an equivalence relation described by a very particular set of equations: for all $a\in \Sigma, aa^{-1} = \epsilon$. Our generalization is to allow for an arbitrary set of equations, which we will refer to as the ``defining set of equations''. Then a generalized reduced representation of a word is an equivalent word such that there are no shorter equivalent words. It is no longer necessary that reduced representation be unique, so we denote the set of generalized reduced representations of a word $w$ as $r_g(w)$. For example, if $\Sigma = \{a, b, c, d\}$ we may have the defining set of equations $\{ab = cd, bc = a\}$. Then ${\rm eq}(abd) = \{abd, cdd, bcbd\}$ and $r_g(aabd) = \{abd, cdd\}$. It is natural to wonder whether an analogous result to Corollary \ref{cor:regclosure} holds under this generalized form of reduced representations.

\begin{proposition}
\label{prop:general}
There exists $L \subseteq \Sigma^*$ such that $L$ is regular but
$r_g(L)$ is not even context-free.
\end{proposition}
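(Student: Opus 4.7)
The plan is to exploit a defining set of equations that is \emph{length-preserving}, so that $r_g(w)$ coincides with the full equivalence class $\mathrm{eq}(w)$ under the equations, and then to engineer that these equivalence classes carry a counting constraint strong enough to force a non-context-free language.

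Concretely, I would take $\Sigma=\{a,b,c\}$ and pick the three commutativity relations $ab=ba$, $ac=ca$, $bc=cb$ as the defining set of equations. Every equation preserves length, so for every word $w\in\Sigma^*$ we have $r_g(w)=\mathrm{eq}(w)$. A short induction on the number of rewrite steps (using that any transposition of two adjacent distinct letters is one of the listed relations, and any permutation of a word is a composition of such transpositions) shows that $u\sim v$ iff $u$ and $v$ have the same Parikh vector. As the regular witness I would take $L=(abc)^*$, which is certainly regular; the Parikh vectors occurring in $L$ are exactly $\{(n,n,n):n\geq 0\}$, and hence
$$r_g(L)=\{w\in\Sigma^*:|w|_a=|w|_b=|w|_c\}.$$

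To finish, I intersect $r_g(L)$ with the regular language $a^*b^*c^*$ to obtain $\{a^n b^n c^n:n\geq 0\}$, a standard non-context-free language (e.g.\ by the pumping lemma). Since context-free languages are closed under intersection with regular languages, $r_g(L)$ itself cannot be context-free, giving the desired $L$.

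There is no real obstacle in this plan; the only step that needs a line of justification is the identification of $\mathrm{eq}(w)$ with Parikh-equivalence under the three commutations, and this is a standard fact about the free commutative monoid presented by pairwise commutativity. The whole construction is then a one-line choice of $\Sigma$, equations, and $L$, followed by the intersection argument above.
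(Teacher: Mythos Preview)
Your proposal is correct and essentially identical to the paper's proof: the paper also takes $\Sigma=\{a,b,c\}$, the defining equations $\{ab=ba,\ ac=ca,\ bc=cb\}$, and $L=(abc)^*$, concluding that $r_g(L)=\{x:|x|_a=|x|_b=|x|_c\}$ is not context-free. You supply more justification than the paper does (the length-preservation observation, the Parikh-equivalence characterization, and the intersection with $a^*b^*c^*$), but the construction and conclusion are the same.
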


\begin{proof}
Consider the following example. Let $\Sigma = \{a,b,c\}$, let the
defining set of equations be $\{ab = ba, ac = ca, bc = cb\}$ and let $L
= (abc)^*$. Then $r_g(L)$ is the language $\{x: |x|_a = |x|_b =
|x|_c\}$, which is known to be noncontext-free, and hence also
nonregular.
\qed
\end{proof}

\end{document}